\theoremstyle{remark}
\newtheorem{definition}{Definition}
\newtheorem{theorem}{Theorem}
\newtheorem{lemma}{Lemma}
\begin{document}

\preprint{APS/123-QED}

\title{Practical quantum multiparty signatures using quantum-key-distribution networks}

\author{E.O. Kiktenko}
\affiliation{Steklov Mathematical Institute, Russian Academy of Sciences, Moscow 119991, Russia}
\affiliation{Russian Quantum Center, Skolkovo, Moscow 143025, Russia}
\affiliation{QApp, Skolkovo, Moscow 143025, Russia}
\affiliation{Moscow Institute of Physics and Technology, Moscow Region 141700, Russia}

\author{A.S. Zelenetsky}
\affiliation{Russian Quantum Center, Skolkovo, Moscow 143025, Russia}
\affiliation{QApp, Skolkovo, Moscow 143025, Russia}
\affiliation{Bauman Moscow State Technical University, Moscow 105005, Russia}
\author{A.K. Fedorov}
\affiliation{Russian Quantum Center, Skolkovo, Moscow 143025, Russia}
\affiliation{QApp, Skolkovo, Moscow 143025, Russia}
\affiliation{Moscow Institute of Physics and Technology, Moscow Region 141700, Russia}

\date{\today}
\begin{abstract}
Digital signatures are widely used for providing security of communications.
At the same time, the security of currently deployed digital signature protocols is based on unproven computational assumptions.
An efficient way to ensure an unconditional (information-theoretic) security of communication is to use quantum key distribution (QKD), whose security is based on laws of quantum mechanics. 
In this work, we develop an unconditionally secure signature scheme that guarantees authenticity and transferability of arbitrary length messages in a QKD network. 
In the proposed setup, the QKD network consists of two subnetworks: (i) an internal network that includes the signer and with limitation on the number of malicious nodes and (ii) an external network that has no assumptions on the number of malicious nodes.
A consequence of the absence of the trust assumption in the external subnetwork is the necessity of  assistance from internal subnetwork recipients for the verification of message-signature pairs by external subnetwork recipients.
We provide a comprehensive security analysis of the developed scheme, perform an optimization of the scheme parameters with respect to the secret key consumption, 
and demonstrate that the developed scheme is compatible with the capabilities of currently available QKD devices.
\end{abstract}

\maketitle

\section{Introduction}

An essential task for modern society is to guarantee the identity of a sender and the authenticity of a message within electronic communications.
This problem can be solved with the use of digital signatures~\cite{DH1976}.
Importantly, digital signatures also guarantee that messages are transferable, so a forwarded message could also be accepted as valid.
Currently deployed digital signatures are mostly based on unproven computational assumptions, such as the computational complexity of factoring large integers or computing discrete logarithms.
This task is believed to be computationally hard for classical computers, but it appeared to be solved in polynomial time using a large-scale quantum computer~\cite{Shor1997}.
This has stimulated active research on the possibility of realizing digital signature, which are resistant to attacks with quantum computers.

One particular option is to use quantum signatures that provide an unconditional (information-theoretic) level of security.
In the seminal theoretical proposal~\cite{GottesmanChuang2001} a quantum version of Lamport's one-time signature scheme~\cite{Lamport1979} based on a one-way quantum function was considered.
Alternative scheme that is based on a quantum one-way function and the involvement of a trusted party was proposed in Ref.~\cite{Lu2005}.
Theoretical proposals were followed by the first experimental demonstration reported in Ref.~\cite{Clarke2012}.
Unfortunately, all these schemes require efficient quantum memory, which is still at an immature stage in its technology. 
The important step towards developing quantum signatures is removing the demanding requirement of quantum memory considered both theoretically~\cite{Andersson2006,Dunjko2014,Wallden2014,Yin2016,Amiri2015,Amiri2016} and experimentally~\cite{Dunjko20142,Donaldson2016,Roberts2017,Zhang2018,Ding2020,Zhang2021}.

An important class of quantum signature schemes (see  Refs.~\cite{Wallden2014,Roberts2017}) is the one based entirely on the technology of quantum key distribution (QKD)~\cite{Gisin2002,Scarani2009,Lo2015,Lo2016}, which is currently available at the commercial level.
In a sense, this approach follows the development of `classical' (traditional) unconditionally secure signature scheme~\cite{Chaum1991, Pfitzmann1996, Hanaoka2000, Hanaoka2004, Shikata2002, Swanson2011,Amiri2015} that provide authenticity and transferability of (classical) messages based on some resource such as authenticated broadcast channel and secret authenticated classical channels.
Since the QKD provides legitimate parties with unconditionally secure symmetric keys, any unconditionally secure signature (USS) scheme that requires secret authenticated classical channels appears to be suitable for implementation in contemporary QKD networks.

However, several obstacles prevent the practical deployment of QKD-assisted USS schemes~\cite{Wallden2014,Roberts2017} as well other quantum signature schemes~\cite{Dunjko2014,Amiri2016,Dunjko20142,Donaldson2016}.
The first one is that all these schemes are analyzed for a network consisting of three parties only.
The second one is that messages of the length of only a single bit are considered.
These two issues are covered in Refs.~\cite{Arrazola2016,Amiri2018}: Specifically, in Ref.~\cite{Arrazola2016} a multiparty QKD-assisted USS scheme was introduced, while Ref.~\cite{Amiri2018} proposed employing almost strongly 2-universal families to sign messages of practically arbitrary length.
Nevertheless, as we show in our work, these schemes suffer from security loopholes that appear in the realization of these schemes in practice.
Moreover, the security bounds derived in security proofs in ~\cite{Amiri2018} are based on the simplified case of two-bit authentication tags and though they demonstrate good asymptotic behavior of secret key consumption, they appear to be impractical for deployment in realistic QKD networks.

In the present work we improve the practicality of USS schemes. 
By revising the results of Ref.~\cite{Amiri2018}, we develop an unconditionally secure signature  scheme that guarantees authenticity and transferability of practically arbitrary length messages in a QKD network of more than four nodes.
In contrast to previous designs of USS schemes, we consider the global network consisting of two subnetworks (see Fig.~\ref{fig:network}): the internal network that includes the signer and with upper bound $\omega$ on the number of malicious nodes and the external network that has no assumptions on the number of dishonest users (in these terms, previously proposed schemes consider the internal subnetwork only).
We introduce a concept of the delegated verification that allows external recipients to verify and forward message-signature pairs by means of assistance from internal subnetwork recipients.
We provide a full security analysis of the developed scheme, perform numerical optimization of the scheme parameters with respect to the secret key consumption, and demonstrate that the developed scheme is compatible with the capabilities of contemporary QKD devices.

\begin{figure}
    \centering
    \includegraphics[width=\linewidth]{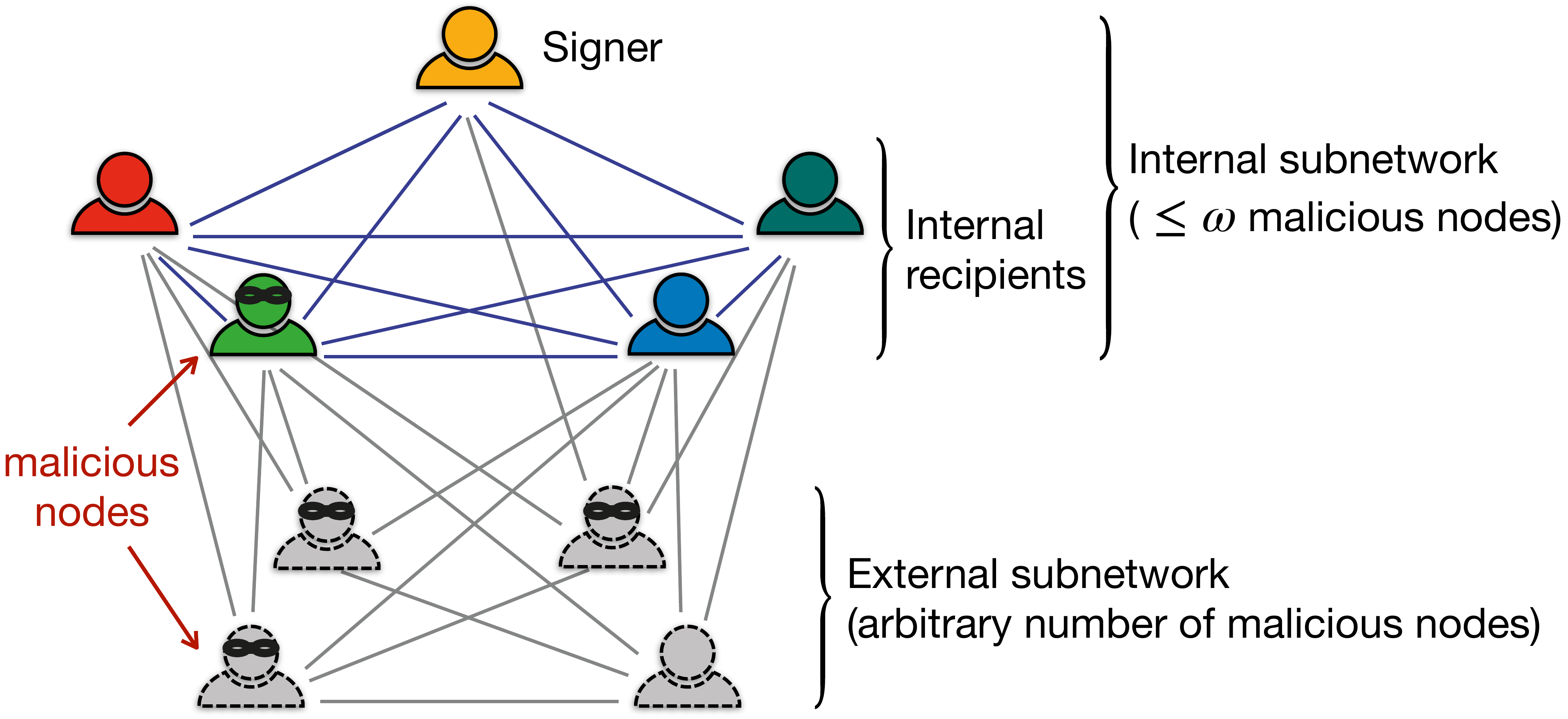}
    \caption{Organization of the QKD network used in the developed USS scheme. 
    The signer and the set of $N$ internal recipients are connected within an all-to-all topology. 
    Each of the $M$ external recipients is necessarily connected with at least $2\omega+1$ internal recipients.
    The requirement on the secret key generation rate for the QKD links connecting nodes of the internal subnetwork is higher compared to all other QKD links.
    }
    \label{fig:network}
\end{figure}

Our work is organized as follows.
In Sec.~\ref{sec:scheme} we describe a general scheme of the proposed QKD-assisted USS scheme.
In Sec.~\ref{sec:security} we introduce the main security definitions and formulate security statements.
In Sec.~\ref{sec:performance} we analyze the performance of the proposed quantum digital signature scheme.
We summarize our results and conclude in Sec.~\ref{sec:conclusion}.

\section{QKD-assisted USS scheme: \\ The workflow}
\label{sec:scheme}

\subsection{Organization of the network}

We consider an operation of the USS scheme in a QKD network consisting of $N+M+1$ nodes, where $N\geq 4$ and $M\geq 0$ (see Fig.~\ref{fig:network}).
Here, by the QKD network, we understand a set nodes (devices) which are able to communicate with classical messages and also are connected with pair-wise QKD setups producing unconditionally secure symmetric keys.
We refer to the whole network of $N+M+1$ nodes as the global QKD network.
Within this global network we distinguish two subnetworks: (i) a set of $N+1$ nodes labeled by $\{{\cal P}_i\}_{i=0}^{N}$ and called an internal subnetwork and (ii) the remaining part of the global network consisting of $M$ nodes labeled by $\{{\cal E}_i\}_{i=1}^{M}$ and called an external subnetwork.
In our USS scheme a message is signed by a distinguished node ${\cal P}_0$, named the signer.
Other nodes $\{{\cal P}_i\}_{i=1}^N$ and $\{{\cal E}\}_{i=1}^M$ are called internal and external recipients,  respectively.

The difference between internal and external subnetworks, in addition to the fact that the signer belongs to the internal one, is in (i) the trust assumptions, (ii) the connectivity, (iii) the symmetric secret key consumption, and (iv) the general verification principle.
Below we discuss all these points in detail.

\emph{Trust assumptions.} We assume that the number of dishonest (malicious) internal nodes does not exceed some positive integer $\omega$, which is one of the basic parameters of our scheme.
In the next section we introduce a strict bound on $\omega$,  but for now we note that $\omega$ is definitely less than $N/3$.
In contrast to the internal subnetwork, the number of dishonest nodes in the external subnetwork is unbounded.
In this way, the global network can be considered as a moderately trusted internal subnetwork surrounded by an untrusted external one.

\emph{Connectivity.}
All the nodes in the internal network (the signer and internal recipients) have to be connected with each other via pairwise QKD setups in an all-to-all fashion.
All that is required from external recipients is to be connected by pairwise QKD setups with at least $2\omega+1$ internal recipients.
Other connections between nodes are of course possible, but are not necessary for the operation of the scheme.

\emph{Symmetric key consumption.}
In our scheme, unconditionally secure symmetric keys obtained from QKD links are employed for two basic purposes: (i) providing unconditionally secure symmetric encryption with one-time pads (OTPs) and (ii) providing unconditionally secure authentication with almost strongly 2-universal (AS2U) family of functions~\cite{Wegman1981}.
We note that although the whole mechanism of the considered USS scheme is based on AS2U families as well, we separate the goals of providing authenticity of pairwise classical channels used in communication between nodes and providing transferability of signatures.
Next we assume that all messages between nodes in the global network are transmitted via pairwise perfect authenticated channels.
We discuss a practical justification of this assumption in more detail in Sec.~\ref{sec:performance}.
The crucial point is that in our scheme the OTP encryption is used only in a communication between nodes of the internal subnetwork.
In Sec.~\ref{sec:performance} we show that the corresponding key consumption turns out to be much higher than the one for providing unconditionally secure authentication. 
That is why secret key rate requirements for QKD links connecting internal with external and external with external recipients appear to be weak compared to the links between internal subnetwork nodes.

\emph{Verification principle.}
The verification of a message-signature pair, produced by ${\cal P}_0$, and its forwarding within the global network can be performed by both internal and external recipients; however, the corresponding verification processes are different.
Internal recipients are able to verify received message-signature pairs directly without additional communication with other nodes. 
External recipients have to communicate to $2\omega+1$ internal recipients for verification purposes.
We call this verification process run by external nodes delegated verification.

\subsection{Basic idea of the scheme}

Before going into the technical description of the scheme, here we sketch the general principle of the USS scheme operation.
It is based on employing special `asymmetric' keys distributed among the internal subnetwork.
In contrast to traditional digital signatures schemes, where the signer possesses a secret (private) key and recipients possess a common public key, in the USS scheme, all internal recipients possess different keys that have to be kept secret as well.
Next we refer to the key owned by the signer and used for generating signatures as the signing key, while the keys possessed by each internal recipient, used for validating signatures, are called verification keys.
The external recipients do not have any special verification keys and they seek assistance from internal recipients to verify a signature, that is, delegate the verification.
The additional communication in the delegated verification can be considered as a consequence of the lack of any trust assumption with respect to the external recipients and milder conditions on their connectivity.

We separate the workflow of the USS into two basic stages: 
(i) the preliminary distribution stage, where signing and verification keys are distributed throughout the internal network, and 
(ii) the main messaging stage, at which the signer generates a signature for some message and transmits the message-signature pair to (some) internal or external recipients, who then are able to forward this message-signature pair to each other.
Stage (ii) can also include a special majority vote dispute resolution process that is a consequence of the finite transferability of the considered USS scheme.

In our work we consider a one-time scenario, where each distributed set of keys provides security for a single message-signature pair only.
Surely, the developed scheme can be applied for multiple messages by means of parallel communication.

As already mentioned, our scheme is based on employing AS2U families that are commonly used for providing unconditionally secure authentication given that the signer and recipient share symmetric keys, which we refer to as authentication keys.
These keys are used to choose a function from an AS2U family and compute the corresponding authentication tag that is an output of the chosen function for a given message to be sent.
The general idea behind the considered USS scheme is that both the signing and verification key consist of a number of authentication keys and
the message's signature consists of a number of authentication tags computed with different authentication keys.
Each of the internal recipients knows only some of the authentication keys and therefore is able to verify the signature but is not able to forge a signature for some alternative message.
At the same time, the signer knows all the authentication keys but does not know a particular subset of keys possessed by a particular recipient.
This condition is necessary to avoid non-transferability and repudiation of a generated message-signature pair.  
The required uncertainty in keys is achieved by first transferring different authentication keys (different parts of the signing key) from the signer to each of the internal recipients and then by random shuffling of the obtained keys between internal recipients in secret from the signer.
The idea behind this delegated verification is that in order to verify a message-signature pair, an external recipient communicates to $2\omega+1$ internal recipients, the majority of whom are definitely honest. 
So the result of the verification can be determined by the result of verification obtained by the majority of requested internal recipients.

It what follows we consider the workflow of the scheme in detail. 
We note that, in general, the same notation as in Ref.~\cite{Amiri2018} is used.
The main differences in the workflow of our scheme compared to the proposal in Ref.~\cite{Amiri2018} are summarized in Appendix~\ref{sec:app:differences}.
We also provide a list of all the basic parameters of the scheme in Tab.~\ref{tab:vars}.

\begin{figure*}[]
   \centering
    \includegraphics[width=0.8\linewidth]{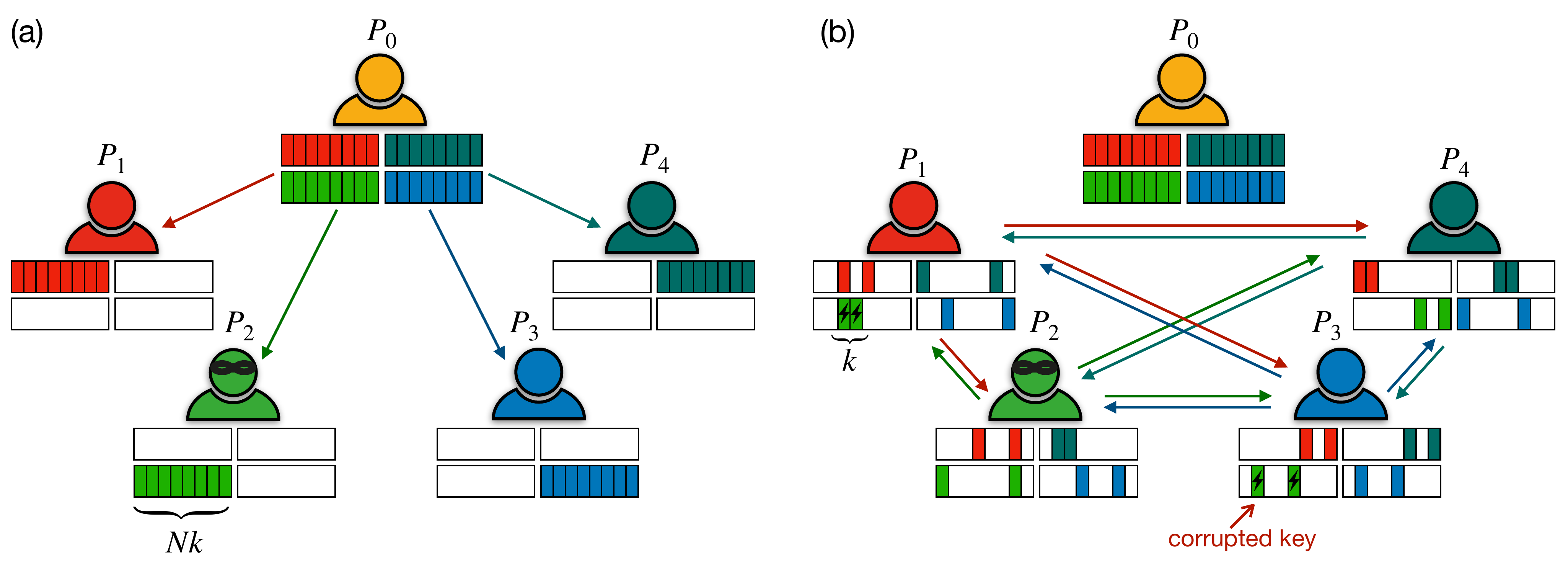}
    \caption{
    Scheme of the distribution stage run within the internal subnetwork.
    (a) At step 1 the signer $\mathcal{P}_0$ generates $N^2k$ authentication keys, which define elements from the AS2U family, and transmits $Nk$ keys to each of the recipients.
    (b) Then at step 2 each recipient $\mathcal{P}_i$ randomly splits the obtained keys into $N$ disjoints chunks of $k$ keys,  sends $N-1$ chunks accompanied by the corresponding key indices (in respect to the original signing key) to other recipients (one chunk for each recipient), and leaves a single chunk for oneself.
    All the communication is secured by the OTP encryption supported by pairwise QKD links. 
    Here ${\cal P}_2$ is dishonest and sends corrupted keys to ${\cal P}_1$ and ${\cal P}_3$ at the second step.}
    \label{fig:distribution}
\end{figure*}

\subsection{Preliminary distribution stage}

We begin with a formal definition of the AS2U family that is the basis of the whole QKD-assisted USS scheme.
\begin{definition}[AS2U family].
	Let $\mathcal{A}$, $\mathcal{B}$, and $\mathcal{K}$ be finite sets. 
	A family of functions $\mathcal{F} = \{f_\kappa : \mathcal{A} \rightarrow \mathcal{B}\}_{\kappa\in\mathcal{K}}$ is called $\varepsilon$-almost strongly 2-universal ($\varepsilon$-AS2U) if the two following requirements are satisfied.
	\begin{enumerate}[label=(\roman*)]
		\item For any $m \in \mathcal{A}$ and $t \in \mathcal{B}$ one has
		$\Pr[f_\kappa(m) = t] = |\mathcal{B}|^{-1}$ for $\kappa$ picked uniformly at random from $\mathcal{K}$.
		\item For any distinct $m_1, m_2 \in \mathcal{A}$ and any $t_1, t_2 \in \mathcal{B}$ one has $\Pr[f_\kappa(m_1) = t_1 | f_\kappa(m_2) = t_2] \leq \varepsilon$ for $\kappa$ picked uniformly at random from $\mathcal{K}$.
    \end{enumerate}
    If $\epsilon=|\mathcal{B}|^{-1}$, then $\mathcal{F}$ is called strongly 2-universal (S2U).
\end{definition}

Let $a$ be an upper bound on the bit length of signed messages.
In our scheme we employ a $2^{-b+1}$-AS2U family $\mathcal{F}=\{f_\kappa\}_{\kappa\in\{0,1\}^y}$ of functions from $\{0,1\}^a$ to $\{0,1\}^b$, where the tag bit length $b>1$, 
$y={3b+2s}$, and the integer $s$ satisfies the inequality $a \leq (2^s+ 1)(b+s)$.
The explicit construction of the family is presented in Appendix~\ref{sec:app:ASU2-construction}.

The workflow of the distribution stage is the following (see also Fig.~\ref{fig:distribution}).
	
{\it Step 1}. Using a true random number generator, the signer $\mathcal{P}_0$ generates $N^2k$ $y$-bit keys $(\kappa_1, \kappa_2, \ldots ,\kappa_{N^2k})$, where an integer $k$ is the basic parameter of the scheme.
This set is the signing key, and each $\kappa_i$ is an authentication key used later for defining functions from the family $\mathcal{F}$.
Then ${\cal P}_0$ transmits to each $\mathcal{P}_i$ ($i=1,2,\ldots,N$) a subset $$(\kappa_{(i - 1)Nk + 1}, \kappa_{(i - 1)Nk + 2},\ldots,\kappa_{iNk})$$ 
using QKD-assisted OTP encryption.

{\it Step 2}. Each internal recipient $\mathcal{P}_i$ randomly splits the obtained keys into $N$ disjoint ordered subsets of size $k$.
Let $R_{i \rightarrow j}\subset\{(i - 1)Nk,\ldots,iNk\}$ with $i,j\in\{1,\ldots,N\}$ be an ordered subset of $k$ key indices belonging to the chosen $j$th subset of $\mathcal{P}_i$'s keys with respect to the original set generated by the signer.
Then $\mathcal{P}_i$ transmits to every $\mathcal{P}_j$ ($j\neq i, 0<i, \text{and} \ j\leq N$) all the keys belonging to the $j$th subset and the corresponding indices $R_{i \rightarrow j}$.
All the messages at this step are also secured with OTP encryption.
The keys with indices from $R_{i \rightarrow i}$ remain with $\mathcal{P}_i$.

In the end of the distribution stage, each internal recipient possesses a set of $Nk$ authentication keys from the original signing key generated by ${\cal P}_0$:
$k$ keys of this set come directly from the signer and the remaining $(N-1)k$ keys come from the other $N-1$ recipients.
We note that, due to the fact that at step 2 recipients exchange indices $R_{i\rightarrow j}$, each recipient knows the indices of each of their $Nk$ keys with respect to the original set of $N^2k$ keys (signing key) generated by the signer.

\subsection{Signature generation}

In order to generate a signature ${\sf Sig}_m$ for a message $m\in\{0,1\}^a$, the signer applies $N^2k$ functions from the family $\mathcal{F}$, specified by generated authentication keys in the signing key, and obtains $N^2k$ authentication tags:
\begin{equation}
	{\sf Sig}_m := (t_1, \ldots ,t_{N^2k}), \quad t_i := f_{\kappa_i}(m).
\end{equation}

In our scheme, the transmission of a message-signature pair is accompanied by sending a verification level, at which this pair was accepted by the current sender.
We describe the concept of verification levels below, but for now we just require that in order to send the signed message to some internal or external node, 
the signer transmits a triple $(m,{\sf Sig}_m,l_{\max})$, where the positive integer $l_{\max}$ is another basic parameter of our scheme, also to be discussed further.

\subsection{Signature verification}

To describe the verification procedure, we first require the introduction of two important concepts: verification levels and block lists.

\subsubsection{Verification levels}

In contrast to standard computationally secure signature schemes, in the USS scheme, the verification rule is specified by an integer parameter $l\in\{0,1,\ldots,l_{\max}\}$ called verification level.
The idea is that if a message-signature pair is accepted by an honest internal or external recipient at verification level $l\geq 1$, then the security properties of the developed scheme ensure that the same message-signature pair will be accepted (with a probability close to 1) by any other internal or external honest recipient at some verification level $l'\geq l-1$.
At the same time, the scheme is developed in such a way, that if the signer is honest and publishes a message-signature pair $(m,{\sf Sig}_m)$, then no one is able to produce (up to negligible probability) a message-signature pair $(m^\star, \sigma^\star)$ with $m^\star\neq m$ that is accepted by some honest internal or external recipient at verification level $l\geq 0$.

We note that in a forwarding chain ${\cal R}_1\rightarrow {\cal R}_2\rightarrow \cdots \rightarrow {\cal R}_s$ of length $s>2$, where the ${\cal R}_i$s are some internal or external recipients, the fact that ${\cal R}_1$ accepts a message-signature pair $(m,\sigma)$ at the verification level $l\geq 1$ implies that \emph{all} other honest recipients in the chain accept $(m,\sigma)$ at verification levels $\geq l-1$.
This is because if some ${\cal R}_i$ $(2<i \leq s)$ does not accept $(m,\sigma)$ at some level $l'\geq l-1$ from ${\cal R}_{i-1}$, then it appears that ${\cal R}_i$ would not accept $(m,\sigma)$ directly from ${\cal R}_1$, which contradicts our security properties (recall that all recipients in the chain are honest, so the message-signature pair does not change). 
To simplify the analysis we also introduce a special rule that the message-signature pair is forwarded together with the verification levels at which it has been accepted. 

The maximal verification level $l_{\max}>0$ can be varied and belongs to the set of basic parameters of the scheme.
In the next section we formulate exact security statements related to the unforgeability and transferability and also introduce a necessary relation between $N$, $l_{\max}$ and $\omega$.
We also note that if the signer is honest, then the security properties of our scheme imply that the original message-signature pair $(m, {\sf Sig}_m)$ will be accepted by any honest recipient at the maximal verification level $l_{\max}$.
Of course the situation may change if there is an intermediary malicious recipient between the honest signer and the honest recipient.

The lowest value of verification level is 0, and
the fact that a message-signature pair is accepted at zero verification level by some honest recipient does not guarantee its acceptance by other honest recipients.
In order to solve this issue we introduce a majority vote dispute resolution process, also discussed later in the text.

\subsubsection{Block list}

The purpose of the block list is termination of communication between honest nodes and apparently dishonest ones.
It allows preventing exhaustive search forging attacks, and also helps to derive rigorous security statements about upper bounds on a probability of forging attacks.

In the preliminary distribution stage each internal (external) recipient ${\cal P}_i$ (${\cal E}_j$) initialize an empty set ${\sf block\_list}^{\sf int}_i:=\{ \ \}$,   (${\sf block\_list}^{\sf ext}_j:=\{ \ \}$).
These sets are intended to store the labels of blocked nodes.

Each internal recipient ${\cal P}_i$ also initializes a set of counters 
${\sf cnt}_{i,i'}:=0$, where $i'=1,2,\ldots, M$.
These counters are intended to keep a number of failed verification requests in the delegated verification from a particular external recipient ${\cal E}_{i'}$. 

Now we are ready to formalize the verification procedure for internal and external recipients.

\subsubsection{Verification by internal recipients}

Consider an internal recipient ${\cal P}_i$ obtaining from some node ${\cal R}$ a package $(m,\sigma, l_{\rm rec})$, where $m\in\{0,1\}^a$, $\sigma=(t_1,\ldots,t_{N^2k})\in\{0,1\}^{b\times N^2k}$, and $l_{\rm rec}\geq 1$.
To verify the message-signature pair $(m,\sigma)$, the following steps are performed [see also Fig.~\ref{fig:verification}(a)].

\begin{figure}[t]
    \centering
    \includegraphics[width=\linewidth]{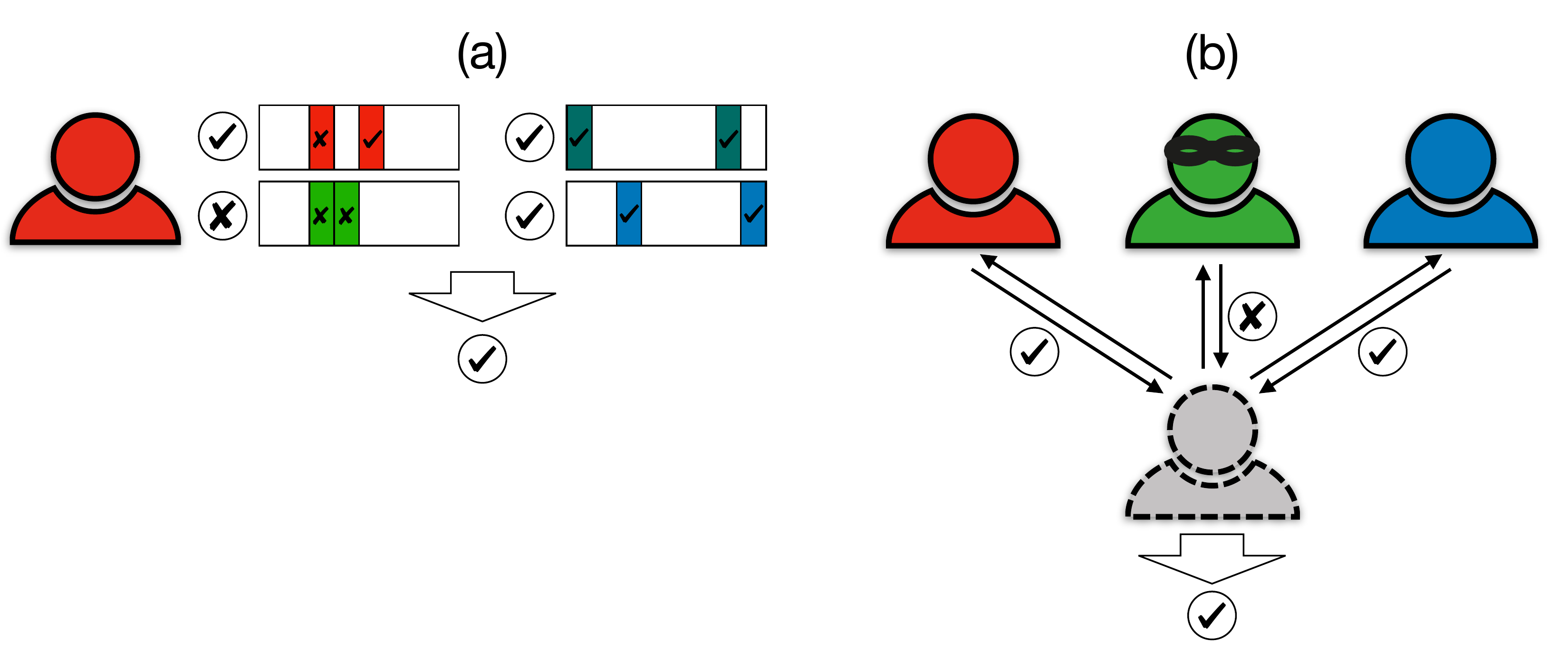}
    \vskip -4mm
    \caption{Signature verification by (a) internal and (b) external recipients.
    Internal recipients first perform tests for each of the possessed signature key blocks [see Eq.~\eqref{eq:T_mijl}] and then make a final decision based on all $N$ test results.
    External recipients send verification requests to $2\omega+1$ of the internal recipients and then make a decision based on the majority principle.}
    \label{fig:verification}
\end{figure}

{\it Step 0}. Internal recipient ${\cal P}_i$ checks whether the sender ${\cal R}\in {\sf block\_list}^{\sf int}_i$.
If it is not the case, ${\cal P}_i$ proceeds with the verification. 
Otherwise, the package is ignored.

{\it Step 1}. Internal recipient $\mathcal{P}_i$ performs $N$ tests corresponding to each of $N$ possessed blocks of $k$ authentication keys:
For each $j\in\{1,\ldots,N\}$ the value
\begin{equation} \label{eq:T_mijl}
	T_{i, j, l}^m = \begin{cases}
	1 &\text{ if } \sum_{r \in R_{j \rightarrow i}} g(f_{\kappa_r}(m), t_r) < s_lk, \\ 
	0 &\text{ otherwise }
    \end{cases} 
\end{equation}
is computed.
Here $g(., .)$ is a comparison function which returns 1 if its arguments are different and 0 otherwise. The value $s_l$ is defined as
\begin{equation}\label{eq:scrit}
	s_l = \left(1-l/l_{\max}\right)s_0, \quad s_0 \in (0,1-2^{1-b}),
\end{equation}
where $s_0$ is the basic parameter of the scheme which denotes the tolerable fraction of incorrect tags for a verification of a subblock corresponding tp a particular recipient at the lowest verification level.

{\it Step 2}.  Internal recipient $\mathcal{P}_i$ obtains a set 
\begin{equation}
    \Lambda = \left\{
        l\in\{0,\ldots,l_{\max}\}: \sum_{j=1}^N T_{i, j, l}^m > T_{l},
    \right\}   
\end{equation}
where the critical number of tests is given by
\begin{equation}\label{eq:Tcrit}
    T_l= \omega + l\omega.
\end{equation}
One can think about $\Lambda$ as a set of verification levels at which the message-signature pair can be accepted.
Note that $\Lambda$ can appear to be empty.
Then the resulting verification level is computed as the maximum in $\Lambda$ or error value --1 for empty $\Lambda$:
\begin{equation}
    l_{{\rm ver},i}(m,\sigma) := 
    \begin{cases}
        \max\Lambda & \text{if }|\Lambda|>0 \\
        -1 & \text{otherwise.}
    \end{cases}
\end{equation}

{\it Step 4}. If $l_{{\rm ver},i}(m,\sigma)\geq l_{\rm rec}-1$, then we say that the message-signature pair is accepted by ${\cal P}_i$ at verification level $l_{{\rm ver},i}(m,\sigma)$.
Otherwise, the message-signature pair is said to be rejected, and the sender ${\cal R}$ is added by ${\cal P}_i$ to the block list: 
${\cal R} \rightarrow {\sf block\_list}^{\sf int}_i$.
If $l_{{\rm ver},i}(m,\sigma)\geq 1$ then ${\cal P}_i$ is allowed to forward the message-signature pair to any other internal or external recipient with a package $(m,\sigma,l_{{\rm ver},i}(m,\sigma))$.

\subsubsection{Delegated verification by external nodes}

Next we describe verification of a proper package $(m,\sigma, l_{\rm rec})$ by an external recipient ${\cal E}_i$.
The following steps are performed [see also~Fig.~\ref{fig:verification} (b)].

{\it Step 0}. External ${\cal E}_i$ first checks whether the sender ${\cal R}\in {\sf block\_list}^{\sf ext}_i$.
If it is not the case, ${\cal E}_i$ proceeds with the verification. 
Otherwise, the package is ignored.

{\it Step 1}.
External recipient ${\cal E}_i$ chooses a subset $\Omega \subset \{1,\ldots, N\}$ such that
${\cal E}_i$ is connected with each ${\cal P}_{i'}$, ${i'}\in \Omega$ by a QKD link.
If the package, obtained by ${\cal E}_i$, comes from some internal node, that is ${\cal R}={\cal P}_j$ for some $j$, then $\Omega$ has to be of size $|\Omega|= 2\omega$ and has to exclude $j$.
Otherwise, if the package is obtained from some other external node, $\Omega$ has to be of size $|\Omega|= 2\omega+1$.
Then ${\cal E}_i$ sends each ${\cal P}_{i'}$, ${i'}\in \Omega$ a request to verify the package $(m,\sigma, l_{\rm rec})$.

{\it Step 2}.
Having received the request, ${\cal P}_{i'}$ first checks whether ${\cal E}_i$ is in the block list ${\sf block\_list}^{\sf int}_{i'}$.
If it the case, then the request is ignored. 
Otherwise, ${\cal P}_{i'}$ runs the verification algorithm described in steps 1--3 from the preceding subsection, forms a response ${\sf resp}_{i'}:=l_{{\rm ver},i'}(m,\sigma)$, and transmits this response back to ${\cal E}_i$.
Moreover, if $l_{{\rm ver},i'}(m,\sigma)<l_{\rm rec}-2$, then ${\cal P}_{i'}$ increments a counter ${\sf cnt}_{i',i}$.
If the counter ${\sf cnt}_{i',i}$ reaches the critical value $M+\omega$, then $i'$ puts ${\cal E}_{i}$ in the block list ${\sf block\_list}^{\sf int}_{i'}$ (the idea behind this operation is to prevent exhaustive-search forgery attacks by using delegated verification requests).

{\it Step 3}.
External recipient ${\cal E}_i$ collects all the responses ${\sf resp}_{i'}$ from ${\cal P}_{i'}, i'\in \Omega$.
If the original package $(m,\sigma,l_{\rm rec})$ has been received from internal node ${\cal P}_j$, then $j$ is added to $\Omega$ and ${\sf resp}_j$ is set to $l_{\rm rec}$.
Then for each $l\in\{-1,0,\ldots,l_{\rm max}\}$ the set
\begin{equation}
    \Omega_l(m,\sigma):=\{i: {\sf resp}_i(m,\sigma) \geq l\}
\end{equation}
is calculated.
The result of the verification is obtained as
\begin{equation}
    l_{{\rm ver},i}^{\rm ext}(m,\sigma):=\max\{l': |\Omega_{l'}(m,\sigma)|\geq \omega+1\}.
\end{equation}

{\it Step 4}. If $l_{\rm ver,i}^{\rm ext}(m,\sigma)\geq l_{\rm rec}-1$ then the message-signature pair $(m,\sigma)$ is said to be accepted at verification level $l_{\rm ver,i}^{\rm ext}(m,\sigma)$ by ${\cal E}_i$ and then ${\cal E}_i$ is allowed to forward the pair in the package $(m,\sigma,l_{\rm ver,i}^{\rm ext}(m,\sigma))$ to any other internal or external recipient.
Otherwise, the message signature-pair is said to be rejected and ${\cal E}_i$ blocks its sender: ${\cal R} \rightarrow {\sf block\_list}^{\sf ext}_i$.

\subsection{Transferability at zero verification level} \label{sec:zerolevel}

The presented verification routines contain an issue related to the minimal verification level $l=0$.
The following situation is possible: The verification protocol run by internal recipient ${\cal P}_i$ with respect to package $(m,\sigma,1)$ results in $l_{{\rm ver},i}(m,\sigma)=0$.
Then ${\cal P}_i$ can be sure that $m$ is produced by ${\cal P}_0$, however it is not guaranteed that $(m,\sigma)$ will be accepted by other honest recipients.

In order to cope with this issue we introduce the majority vote dispute resolution process (or majority vote, for short), also employed in previous USS scheme designs~\cite{Arrazola2016,Amiri2018}.
The majority vote is an expensive (in terms of communication costs) routine and is not a necessary part of the USS scheme workflow.
However, its potential possibility is necessary for providing security of the scheme. 

We note that in the previous USS scheme designs~\cite{Arrazola2016,Amiri2018}, an additional --1th verification level is reserved for the purposes of the majority vote process.
The security of the process is justified by establishing protection against non-transferability attacks at all verification levels from $l_{\max}$ down to --1.
Though formally all the claimed security statements in~\cite{Arrazola2016,Amiri2018} are completely correct, there is as issue with the fact that unforgeability is provided for verification levels from $l_{\max}$ down to 0.
This fact opens a security loophole related to the possibility of a malicious recipient to forge a message-signature pair $(m^\star,\sigma^\star)$ that is acceptable by honest recipients at the --1th verification level and then initiate the majority vote process.
Then honest recipients will accept $(m^\star,\sigma^\star)$, although $m^\star$ was not signed by the legitimate signer.
The straightforward solution to this issue is to extend unforgeability down to $l=-1$ as well or to increase $l_{\max}$ and consider the majority vote process at the zero verification level.
We choose the second solution in our work.

The operation of the majority vote employs an unconditionally secure broadcast protocol run within the set of internal recipients.
For the description of the broadcast protocol we refer the reader to the seminal works in ~\cite{Pease1980, Lamport1982}.
Here we only recall that the broadcast protocol allows a node in the network to transmit a message to a number of other nodes in such a way that it is guaranteed that all the honest recipients will obtain this message, and also it is guaranteed that if an honest recipient obtains a message from the broadcast protocol, then that recipient can be sure that other honest recipients obtained the same message.
As shown in Refs.~\cite{Pease1980, Lamport1982}, in the presence of no more than $\omega$ dishonest nodes, the protocol requires $\omega+1$ rounds of communication, at which nodes transmit messages through unconditionally secure authentication channels between them. 
Moreover, the broadcast protocol for $N$ parties can be realized only if $\omega<N/3$, which is the case in our setup.

\subsubsection{Majority vote process within the set of internal recipients}

Here we describe the majority vote process.
It is allowed to be launched by any internal recipient ${\cal P}_i$ for a message-signature pair $(m,\sigma)$ only in the case of $l_{{\rm ver},i}(m,\sigma)=0$.

{\it Step 1.}
The initiator of a majority vote process ${\cal P}_i$ broadcasts the message-signature pair $(m,\sigma)$ to all other internal recipients.

{\it Step 2.}
Every internal recipient ${\cal P}_j$ $(j=1,2,\ldots,N)$ broadcasts the result of its verification in the form
${\sf vote}_j(m,\sigma) = l_{{\rm ver},j}(m,\sigma)$.

{\it Step 3.}
Every node computes the result of the majority vote protocol in the form
\begin{equation}
    {\sf MV}(m,\sigma):=\begin{cases}
        \checkmark & \text{if }\sum_{j=1}^{N}\widetilde{\sf vote}_j>N/2 \\
        \oslash & \text{otherwise},
    \end{cases}
\end{equation}
where 
\begin{equation}
    \widetilde{\sf vote}_j := \begin{cases}
        1 & \text{if }{\sf vote}_j\geq 0 \\
        0 & \text{otherwise},
    \end{cases}
\end{equation}
and it is assumed that the vote of the initiator of the process ${\sf vote}_i=0$.
We note, that due to the properties of the broadcast protocol, all the honest internal nodes obtain the same value of ${\sf MV}(m,\sigma)$.
If the process results in ${\sf MV}(m,\sigma)=\checkmark$ $(\oslash)$, then $(m,\sigma)$ is said to be accepted (rejected) by the majority vote.

One can also see that if  there exists a set
$\Omega \subset \{1,2,\ldots,N\}$, such that $|\Omega| \geq \omega+1$ and for every $i\in \Omega$, ${\sf vote}_i\geq 2$, then all honest nodes can conclude that the originator of the majority vote protocol is dishonest (with up to negligible probability of a fail).
This follows from the fact that there is at least one honest recipient ${\cal P}_i$ with $i\in\Omega$, so the originator of the majority vote should accept $(m,\sigma)$ at verification level $l\geq 1$.
So the rules of the USS operation are supplemented by punishment for dishonest conduct.

\subsubsection{Majority vote results verification by external recipients}

Here we describe how an external recipient ${\cal E}_i$ can obtain the results of the majority vote performed within the set of internal recipients with respect to some message-signature pair $(m,\sigma)$.

{\rm Step 1.} 
Node ${\cal E}_i$ chooses a subset $\Omega \subset \{1,\ldots,N\}$ of size $|\Omega|= 2\omega+1$, such that ${\cal E}_i$ is connected with every node ${\cal P}_{i'}$ with $i'\in\Omega$ by a QKD link.
${\cal E}_i$ sends a majority vote verification request consisting of $(m,\sigma)$ to every internal recipient from the set $\Omega$.

{\rm Step 2.} 
Having received the request, ${\cal P}_{j}$ makes a response ${\sf MV\_resp}_{j}:={\sf MV}(m,\sigma)$ if there was a majority vote with respect to $(m,\sigma)$, or ${\sf MV\_resp}_{j}:=\bot$ otherwise.

{\rm Step 3.}
Node ${\cal E}_i$ collects all the responses $\{{\sf MV\_resp}_{j}\}_{j\in\Omega}$.
Let $\#[\checkmark]$ and $\#[\oslash]$ be number of occurrences of responses $\checkmark$ and $\oslash$ in $\{{\sf MV\_resp}_{j}\}_{j\in\Omega}$, respectively.
The result of delegated verification of the majority vote results by the external node ${\cal E}_i$ is given by
\begin{equation}
    {\sf MV}^{{\sf ext}}_{i}(m,\sigma) := \begin{cases}
        \checkmark &\text{if } \#[\checkmark]\geq \omega+1\\
        \oslash &\text{if } \#[\oslash]\geq \omega+1\\
        \bot &\text{otherwise.}
    \end{cases}
\end{equation}

If ${\sf MV}^{{\sf ext}}_{i}(m,\sigma)=\checkmark (\oslash)$ then we say that $(m,\sigma)$ is said to be accepted (rejected) by ${\cal E}_i$ within majority vote results verification.
Here ${\sf MV}^{{\sf ext}}_{i}(m,\sigma)=\bot$ means there was no majority vote with respect to $(m,\sigma)$ in the internal network.
The security properties of the developed scheme ensure that
if ${\sf MV}^{{\sf ext}}_{i}(m,\sigma)=\checkmark$ for some honest external recipient ${\cal E}_i$, then for every other honest external recipient ${\cal E}_j$ the described majority vote verification protocol will result in ${\sf MV}^{{\sf ext}}_{j}(m,\sigma)=\checkmark$; in addition, for any honest internal recipient it is true that ${\sf MV}(m,\sigma)=\checkmark$.

We note that an extra rule can be added that some particular external nodes can insist on running the majority vote with respect to some message-signature pair.
Then after requesting no more than $\omega+1$ nodes the majority vote will happen  ($\omega$ nodes can be dishonest and deny the start of voting) and each of the external nodes will be able to receive its result by using the described protocol.

\section{Security analysis}\label{sec:security}

In this section we introduce security definitions and corresponding security statements (all proofs are placed in Appendix~\ref{sec:app:proofs}).
Here we also demonstrate how the security conditions impose dependences between the basic parameters of the scheme.

\subsection{Signature acceptability} \label{sec:acceptability}

We start with a natural way to demand that all honest (internal and external) recipients have to accept a message-signature pair $(m,{\sf Sig}_m)$ generated by the honest signer.
The nontriviality of this condition for the USS scheme comes from the fact that the verification key of ${\cal P}_i$ contains authentication keys that come from all other, including possibly malicious, recipients.
These dishonest recipients can try to foil the verification procedure performed by ${\cal P}_i$ with respect to $(m,{\sf Sig}_m)$ by transferring ‘rubbish’ keys at the second step of the distribution stage [see, e.g., ${\cal P}_2$ in Fig.~\ref{fig:distribution}(b)].
In the original design of the USS scheme in Ref.~\cite{Amiri2018} an acceptance of $(m,{\sf Sig}_m)$ by any honest recipient at zero verification level.
However, the acceptance of a message-signature pair $(m,\sigma)$ exactly at the zero verification level (but not at higher levels) closes off the possibility of reliable forwarding of $(m,\sigma)$ by ${\cal P}_i$ to other recipients without appealing to the majority vote process.
The problem here is that ${\cal P}_i$ is not able to distinguish between the two following situations.
The first one is where ${\cal P}_0$ is honest and an acceptance of the message-signature pair only at the lowest verification level is due to an attack of the malicious recipient coalition [but $(m,\sigma)$ will in fact be accepted by other honest recipients].
The second situation is where ${\cal P}_0$ is malicious and tries to perform a non-transferability attack (to be discussed further).
In our design we use the following definition.

\begin{definition}[signature acceptability].
    We say that the USS scheme provides a signature acceptability if a message-signature pair generated by an honest signer is accepted by any honest internal or external recipient at the maximal verification level $l_{\max}$; that is, for any $m\in\{0,1\}^a$, a verification procedure run by an internal (external) recipient ${\cal P}_i$ (${\cal E}_j$) with respect to the package $(m,{\sf Sig}_m,l_{\max})$ results in $l_{{\rm ver},i}(m,{\sf Sig}_m)=l_{\max}$ [$l_{{\rm ver},j}^{\rm ext}(m,{\sf Sig}_m)=l_{\max}$], assuming that the sender of this package is not in the block list of ${\cal P}_i$ (${\cal E}_j$).
\end{definition}
The signature acceptability property guarantees that malicious recipients are not able to decrease transferability by cheating during the distribution stage.
This condition leads to the appearance of an upper bound on a number of dishonest participants $\omega$ depending on a maximal verification level $l_{\max}$.

\begin{theorem}[upper bound on $\omega$]. \label{thm:trade-off}
	The USS scheme, described in the text, provides  the signature acceptability if and only if
	\begin{equation} \label{eq:trade-off}
	    \omega < \frac{N}{2+l_{\max}}.
	\end{equation}
\end{theorem}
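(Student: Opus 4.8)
The plan is to reduce both implications to one counting argument about how many of the $N$ sub-block tests $T^{m}_{i,j,l}$ of Eq.~\eqref{eq:T_mijl} can pass, built on the elementary observation that, with $T_{l_{\max}}=\omega(1+l_{\max})$ from Eq.~\eqref{eq:Tcrit}, the bound $\omega<N/(2+l_{\max})$ is equivalent to $N-\omega>T_{l_{\max}}$. The structural input is that, after the distribution stage, an honest internal recipient $\mathcal{P}_i$ holds $N$ ordered sub-blocks $\{R_{j\to i}\}_{j=1}^{N}$: the one it kept, $R_{i\to i}$, together with every sub-block received from an honest $\mathcal{P}_j$, is filled with exactly the authentication keys $\kappa_r$ generated by the signer. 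Since at most $\omega$ internal recipients are malicious, $\mathcal{P}_i$ owns at least $N-\omega$ such ``clean'' sub-blocks.

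For the \emph{if} direction I assume $N-\omega>T_{l_{\max}}$, take an honest signer who has published $(m,{\sf Sig}_m)$, and let $\mathcal{P}_i$ be any honest internal recipient. On each clean sub-block $R_{j\to i}$ one has $f_{\kappa_r}(m)=t_r$ for all $r$, so $\sum_{r\in R_{j\to i}}g(f_{\kappa_r}(m),t_r)=0$ and $T^{m}_{i,j,l}=1$ for every $l\le l_{\max}$; hence $\sum_{j=1}^{N}T^{m}_{i,j,l_{\max}}\ge N-\omega>T_{l_{\max}}$, so $l_{\max}\in\Lambda$ and $l_{{\rm ver},i}(m,{\sf Sig}_m)=\max\Lambda=l_{\max}$, and Step~4 accepts at level $l_{\max}$ because $l_{\rm rec}=l_{\max}$. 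For an honest external recipient $\mathcal{E}_i$: whether its package arrived from an external node (it then queries $2\omega+1$ internal recipients) or from an internal node $\mathcal{P}_j$ (it then queries $2\omega$ others and sets ${\sf resp}_j=l_{\rm rec}=l_{\max}$), the collected responses contain at least $\omega+1$ entries equal to $l_{\max}$ — coming from honest internal recipients and, in the second case, from $\mathcal{P}_j$. Therefore $|\Omega_{l_{\max}}(m,{\sf Sig}_m)|\ge\omega+1$, so $l_{{\rm ver},i}^{\rm ext}(m,{\sf Sig}_m)=l_{\max}$ and Step~4 again accepts at level $l_{\max}$. (No block list has been populated in this scenario, so Step~0 never intervenes.) Thus the scheme provides signature acceptability.

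For the \emph{only if} direction I argue by contraposition: assume $\omega\ge N/(2+l_{\max})$, i.e.\ $N-\omega\le T_{l_{\max}}$, and construct a distribution-stage behaviour of the malicious recipients that defeats acceptability. Let exactly $\omega$ recipients be malicious and let all of them target one fixed honest recipient $\mathcal{P}_i$: at Step~2 each of them substitutes, in the sub-block it sends to $\mathcal{P}_i$, a fresh uniformly random key $\kappa'_r\in\mathcal{K}$ for every genuine $\kappa_r$. When the signer later signs a fixed message $m^{\star}$, property~(i) of the AS2U family makes $f_{\kappa'_r}(m^{\star})$ uniform on $\mathcal{B}$ and independent of $t_r=f_{\kappa_r}(m^{\star})$, so each corrupted position mismatches the published tag independently with probability $1-2^{-b}$, which exceeds $s_0$ since $s_0<1-2^{1-b}$. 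A Chernoff bound then gives that, with positive probability over the choice of the $\kappa'_r$ — hence for some admissible key assignment — every one of the $\omega$ corrupted sub-blocks of $\mathcal{P}_i$ contains more than $s_0k\ge s_lk$ mismatches and so fails the test of Eq.~\eqref{eq:T_mijl} at every level $l$. Consequently $\sum_{j=1}^{N}T^{m^{\star}}_{i,j,l_{\max}}\le N-\omega\le T_{l_{\max}}$, so $l_{\max}\notin\Lambda$ and $l_{{\rm ver},i}(m^{\star},{\sf Sig}_{m^{\star}})<l_{\max}$: an honestly signed pair is rejected at the maximal level by the honest $\mathcal{P}_i$, and acceptability fails.

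The main obstacle is the boundary arithmetic that produces the ``$2$'' rather than ``$1$'' in $N/(2+l_{\max})$: it relies on the adversary being allowed to corrupt the \emph{maximal} number $\omega$ of recipients (leaving exactly $N-\omega$ clean sub-blocks), on the \emph{strict} inequality in the definition of $\Lambda$ against $T_{l_{\max}}=\omega(1+l_{\max})$, and on clean sub-blocks passing the test at the top level $l_{\max}$ as well as at all lower levels. A secondary subtlety is that in the \emph{only if} direction the concentration estimate must hold simultaneously for all $\omega$ corrupted sub-blocks and with $m^{\star}$ fixed after the keys are committed; this is what forces the use of independence and of the exact tag-distribution in property~(i) of the AS2U definition, rather than a cruder argument.
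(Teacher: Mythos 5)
Your proof is correct and follows essentially the same route as the paper's: the same counting of clean versus corrupted sub-blocks reduces acceptability to $N-\omega>\omega(1+l_{\max})$, i.e.\ to Eq.~\eqref{eq:trade-off}, with honest sub-blocks always passing the test and the malicious coalition able to force all $\omega$ of its sub-blocks to fail, and with the external case settled by the presence of at least $\omega+1$ responses equal to $l_{\max}$ in the delegated verification. The only difference is that your ``only if'' direction makes explicit, via random key substitution, property~(i) of the AS2U family, and a concentration bound, what the paper simply asserts as the dishonest recipients sending ``rubbish'' keys.
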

We design our scheme to possess the signature acceptability, so we assume that~\eqref{eq:trade-off} is fulfilled.
We note that from~\eqref{eq:trade-off} it follows that $\omega<N/3$, since $l_{\max}$ is a positive integer.
Moreover, to have $\omega=1$ it necessary to have $N\geq 4$.

\subsection{Signature unforgeability}

The second security property relates to the assurance that no one, other than the signer, can generate a valid signature.
To describe this condition we firstly introduce a formal definition of a signature forgery.

\begin{definition}[signature forgery].
	Consider a situation where there is a coalition of dishonest recipients.
	Let the coalition possess a valid message-signature pair $(m, {\sf Sig}_m)$. 
	Suppose, that for some message $m^\star \neq m$ the coalition generates a guess for a corresponding signature $\sigma^\star$. 
	If the pair $(m^\star, \sigma^\star)$ is accepted by at least one honest internal (external) recipient $\mathcal{P}_i$ ($\mathcal{E}_j$) at some verification level $l\geq 0$, then we say that a forgery event happened.
\end{definition}

We note that the definition of a forgery introduced here is an extension of that used in Ref.~\cite{Amiri2018}.
In particular, it includes consideration of the lowest verification level employed in the majority vote process.
The motivation of such an extension of the forgery definition comes from the security issues described in Sec.~\ref{sec:zerolevel}.

The following theorem states that in the considered USS scheme the probability of forgery drops exponentially with the value of $k$.

\begin{theorem}[probability of forgery]. \label{thm:forgery}
	For the USS scheme described herein, the upper bounds on a forgery event hold,
	\begin{equation} \label{eq:forg-bound-1}
		\Pr[{\rm forgery}] < J({N,M,\omega}) 2^{-k(b - 1)[1 - s_0 - H_2(s_0)/(b - 1)]}
	\end{equation}
	for $s_0<\frac{1}{2}$ and 
	\begin{equation} \label{eq:forg-bound-2}
		\Pr[{\rm forgery}] < 
		J({N,M,\omega}) e^{-2k(1 - s_0 - 2^{1-b})^2}
	\end{equation}
	for $s_0 < 1 - {2^{1-b}}$,
	where 
	\begin{equation}
	    J({N,M,\omega}):=N^2[\omega+M(\omega+M)]
	\end{equation}
	and $H_2(\cdot)$ is a standard Shannon binary entropy.
\end{theorem}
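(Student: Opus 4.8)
The plan is to reduce an arbitrary forgery to the event that, for some honest internal recipient ${\cal P}_i$, at least one of the $k$-key blocks it received through honest parties survives the \emph{lowest-level} test~\eqref{eq:T_mijl} on the forged message $m^\star$, to estimate that single-block event by a Chernoff bound, and then to take a union bound over all honest recipients, all of their blocks, and all forgery attempts that the block-list and counter rules permit. First I would note that $s_l$ in~\eqref{eq:scrit} is non-increasing in $l$ while $T_l=\omega+l\omega$ in~\eqref{eq:Tcrit} is increasing, so $T^m_{i,j,l}=1\Rightarrow T^m_{i,j,l'}=1$ for $l'\le l$ and hence the acceptance set $\Lambda$ is always a down-set $\{0,1,\dots,L\}$ (possibly empty); consequently $l_{{\rm ver},i}(m^\star,\sigma^\star)\ge 0$ if and only if $\sum_{j=1}^{N}T^{m^\star}_{i,j,0}>\omega$. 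For an external victim ${\cal E}_j$, acceptance at level $\ge 0$ forces $|\Omega_0(m^\star,\sigma^\star)|\ge\omega+1$; since ${\cal E}_j$ queries $2\omega+1$ internal recipients, of which at most $\omega$ are dishonest, at least one honest internal recipient must return $l_{{\rm ver}}\ge 0$ on $(m^\star,\sigma^\star)$. So in every case a forgery forces some honest ${\cal P}_i$ to have at least $\omega+1$ of its $N$ blocks pass the $l=0$ test.

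Next I would isolate the blocks that can actually be ``forced''. At most $\omega$ of the $N$ blocks held by an honest ${\cal P}_i$ come from dishonest recipients; the coalition knows those keys and can set the corresponding tags $t^\star_r=f_{\kappa_r}(m^\star)$ exactly, so those blocks pass for free and a forgery must make at least one of the remaining $\ge N-\omega$ ``honest'' blocks pass. For an honest block $R_{j\to i}$ (honest $j$, honest $i$, which includes $R_{i\to i}$) every key travelled from the honest signer to ${\cal P}_j$ and then to ${\cal P}_i$ entirely under one-time-pad encryption, so the coalition's only information about each such $\kappa_r$ is the tag $t_r=f_{\kappa_r}(m)$ contained in the known pair $(m,{\sf Sig}_m)$; and because the signer drew the $\kappa_r$ independently and uniformly, they remain independent conditioned on these tags.

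For one such honest block, with the adversary's choice of $t^\star_r$ fixed in advance, property~(ii) of the AS2U family gives $\Pr[f_{\kappa_r}(m^\star)=t^\star_r\mid f_{\kappa_r}(m)=t_r]\le 2^{1-b}$ for every $r$ in the block, so the number of correctly guessed tags is stochastically dominated by a $\mathrm{Bin}(k,2^{1-b})$ variable, and the block passes the $l=0$ test precisely when more than $(1-s_0)k$ of its $k$ tags are correct. Since $s_0<1-2^{1-b}$, Hoeffding's inequality bounds this by $e^{-2k(1-s_0-2^{1-b})^2}$, which yields~\eqref{eq:forg-bound-2}; the sharper Chernoff bound $\Pr[\mathrm{Bin}(k,2^{1-b})\ge(1-s_0)k]\le 2^{-kD}$ with $D=D_2((1-s_0)\|2^{1-b})\ge(b-1)(1-s_0)-H_2(s_0)$ (the estimate holding for $s_0<\tfrac12$ after discarding the non-negative term $-s_0\log_2(1-2^{1-b})$) gives~\eqref{eq:forg-bound-1}.

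Finally I would assemble the union bound. Each honest ${\cal P}_i$ evaluates only boundedly many distinct bogus pairs $(m^\star,\sigma^\star)$ over the whole run: at most $\omega$ arrive as forwarded packages from the $\le\omega$ dishonest internal recipients (each blocked by ${\cal P}_i$ after a single rejection), and at most $M(M+\omega)$ arrive from the $\le M$ dishonest external recipients, since a dishonest ${\cal E}_{i'}$ is placed on ${\cal P}_i$'s (shared) block list once its counter ${\sf cnt}$ reaches $M+\omega$. Union-bounding the single-block estimate over the $\le N$ honest recipients, their $\le N$ honest blocks, and these $\le\omega+M(M+\omega)$ attempts produces the prefactor $J(N,M,\omega)=N^2[\omega+M(\omega+M)]$ and hence~\eqref{eq:forg-bound-1}--\eqref{eq:forg-bound-2}. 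The delicate step is this last one: the probabilistic core is routine once the secrecy of the honestly-distributed blocks is in hand, but correctly counting the admissible forgery attempts — in particular arguing that the delegated-verification channel, together with the ``$l_{\rm rec}-2$'' slack in the counter rule, cannot be abused for an unbounded adaptive search against an honest internal recipient — is what pins down the exact form of $J(N,M,\omega)$.
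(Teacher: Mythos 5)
Your proposal is correct and follows essentially the same route as the paper's proof: reduce a forgery to forcing at least one honestly-distributed key block of some honest internal recipient to pass the $l=0$ test, bound the per-block success probability via the $2^{1-b}$-AS2U property together with a binomial tail estimate (Hoeffding for~\eqref{eq:forg-bound-2}, an entropy/Chernoff argument for~\eqref{eq:forg-bound-1}), and union-bound over honest victims, honest blocks, and the $\omega+M(M+\omega)$ attempts permitted by the block lists and counters to obtain $J(N,M,\omega)$. The only cosmetic difference is that you derive~\eqref{eq:forg-bound-1} from the Kullback--Leibler form of the Chernoff bound while the paper bounds the binomial sum term by term, and the adaptivity issue you flag at the end is exactly what the paper's Lemma~\ref{Lemma:1} is there to settle.
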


We note that although the first bound~\eqref{eq:forg-bound-1} is tighter, the second bound~\eqref{eq:forg-bound-2} appears to be more practical in consideration of both unforgeability and transferability conditions.

\subsection{Message transferability} \label{sec:transferability}

The third condition relates to the requirement that if an honest (internal or external) recipient accepts a message-signature pair at the verification level $l\geq 1$, then another honest recipient will accept the same pair at least at the $(l-1)$th verification level.
A crucial difference compared to the previous condition is that the sender may belong to the coalition of dishonest participants.

\begin{definition}[non-transferability].
    Consider a situation where there is a coalition of dishonest nodes (${\cal P}_0$ may or may not belong to this coalition).
	Let the coalition output a message-signature pair $(m, \sigma)$.
	We say that a non-transferability event happens if some honest internal or external recipient accepts $(m, \sigma)$ at the verification level $l\geq 1$, but other internal or external recipient do not accept $(m, \sigma)$ at any verification level $l'\geq l-1$.
\end{definition}

We note that, in contrast to previous USS scheme designs~\cite{Arrazola2016,Amiri2018}, we also include the possibility that the malicious coalition does not include the signer ${\cal P}_0$.
Such a coalition may try to corrupt the valid signature ${\sf Sig}_m \rightarrow \sigma$ in such a way that $(m,\sigma)$ is accepted by ${\cal P}_i$ at the verification level $l\geq 1$, but is rejected by ${\cal P}_j$ at the verification level $l'=l-1$.

The next theorem states that the probability of a non-transferability event also drops exponentially with the value of $k$.

\begin{theorem}[probability of non-transferability]. \label{thm:non-transferability}
    For the USS scheme, described herein, the upper bound on the probability of a non-transferability event holds true,
    \begin{equation}\label{eq:non-trans-bound}
        \Pr[\text{non-transferability}] \leq 2N^2(N-1)
            e^{-k\Delta s^2/2},
    \end{equation}
    where $\Delta s=s_0/l_{\max}$.
\end{theorem}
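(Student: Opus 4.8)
The plan is to reduce the non-transferability event to a collection of "bad concentration" events for sums of independent $\{0,1\}$-valued random variables, and then bound each by Hoeffding's inequality. First I would fix the message-signature pair $(m,\sigma)$ output by the malicious coalition and recall the structure of an honest internal recipient's decision: the acceptance level $l_{{\rm ver},i}(m,\sigma)$ is governed by the indicators $T^m_{i,j,l}$ of Eq.~\eqref{eq:T_mijl}, which compare $\sum_{r\in R_{j\to i}} g(f_{\kappa_r}(m),t_r)$ against the sliding threshold $s_l k$. The key observation is that these thresholds for consecutive levels differ by exactly $\Delta s\, k = (s_0/l_{\max})k$, so a single block of $k$ tags held by some honest $\mathcal{P}_j$ can change its verdict between level $l$ and level $l-1$ only if the empirical error fraction in that block falls within a window of width $\Delta s$ around $s_{l-1}$. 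For any \emph{fixed} $(m,\sigma)$ with $m$ possibly unsigned, the quantities $g(f_{\kappa_r}(m),t_r)$ over the $k$ indices $r\in R_{j\to i}$ are independent across $r$ (the keys $\kappa_r$ are independent) and the block $R_{j\to i}$ itself is hidden from everyone except the honest pair $\mathcal{P}_j,\mathcal{P}_i$ — in particular hidden from the coalition that chose $(m,\sigma)$.

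Next I would make the reduction precise. Suppose the honest recipient $\mathcal{R}$ accepts at level $l\ge 1$ while another honest recipient $\mathcal{R}'$ fails to accept at level $l-1$; both could be internal, or one/both could be external, in which case their verdicts are majority votes over $2\omega+1$ internal recipients of which at least $\omega+1$ are honest, so it suffices to track honest internal recipients and the sets $\Lambda$ they compute. The crucial point is that the \emph{signer-originated} shuffling at Step~2 of the distribution stage means the block an honest $\mathcal{P}_j$ sent to an honest $\mathcal{P}_i$ is a uniformly random size-$k$ subset, unknown to the adversary; hence for each ordered honest pair $(j,i)$ and each level the event "the $j$-block at $\mathcal{P}_i$ is a borderline block" is, conditioned on the coalition's choice of $(m,\sigma)$ and on all corrupted keys, a deviation of an empirical mean of $k$ i.i.d.\ bits from its expectation by at least... — and here the honest-signer-or-not dichotomy is handled by noting that if $\mathcal{R}$ accepts at level $l$, enough honest blocks must have error fraction below $s_l$, while if $\mathcal{R}'$ rejects at level $l-1$, enough honest blocks at $\mathcal{R}'$ must have error fraction at least $s_{l-1}$; since the \emph{same} underlying per-key error probabilities govern both (they depend only on $f_{\kappa_r}(m)$ versus $t_r$, which is a fixed deterministic function of the fixed data once $\kappa_r$ is drawn), a gap of $s_{l-1}-s_l=\Delta s$ between the two honest recipients' observed fractions must be spanned. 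Each such single-block deviation of magnitude $\ge \Delta s/2$ from the common mean has probability at most $e^{-2k(\Delta s/2)^2}=e^{-k\Delta s^2/2}$ by Hoeffding.

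Finally I would assemble the union bound to get the prefactor $2N^2(N-1)$. There are at most $N$ choices for the accepting honest recipient, $N$ for the rejecting one (internal recipients, or the honest representatives of external ones — the combinatorics collapse into this count once one checks the external majority-vote step only amplifies by constants already absorbed), and $N-1$ for the index $j$ of the shared honest block that must straddle the threshold; the factor $2$ comes from the two one-sided tails (the accepting side needs the fraction low, the rejecting side needs it high). Multiplying the $e^{-k\Delta s^2/2}$ bound by this count yields Eq.~\eqref{eq:non-trans-bound}.

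The main obstacle I anticipate is the bookkeeping in the reduction, not the probabilistic estimate: one must carefully verify that \emph{some} single honest $j$-block is forced to be borderline (rather than the discrepancy being smeared across many blocks so that no individual block deviates much). This requires a pigeonhole argument on the test counts — using $T_l=\omega(1+l)$ from Eq.~\eqref{eq:Tcrit}, the signature acceptability bound $\omega<N/(2+l_{\max})$ from Theorem~\ref{thm:trade-off}, and the fact that at most $\omega$ blocks at any recipient are adversarially controlled — to conclude that the honest test counts at $\mathcal{R}$ and at $\mathcal{R}'$ differ by at least one in a way that pins down at least one common honest index whose verdict flips, hence whose block fraction crosses the width-$\Delta s$ window. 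Once that combinatorial "flip localization" lemma is in place, the rest is a direct Hoeffding-plus-union-bound computation.
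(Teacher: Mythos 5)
Your proposal follows essentially the same route as the paper's proof: localize a single honest block whose verdict must flip by the test-count arithmetic with $T_l=\omega+l\omega$ and $|\mathcal{C}|\le\omega$, exploit the coalition's ignorance of the honest recipients' random block split to equate the expected error fractions of the two blocks, apply Hoeffding with a deviation of $\Delta s/2$ from the common mean, and union-bound over the $2N^2(N-1)$ choices. The only caveat is that the relevant randomness is the hidden random assignment of corrupted positions to blocks (sampling without replacement), not independence of the keys $\kappa_r$ themselves, which fail to be random when the signer is malicious — but since you also invoke the hidden-subset argument and Hoeffding covers sampling without replacement, the bound goes through unchanged.
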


Comparing Eqs.~\eqref{eq:forg-bound-1}, \eqref{eq:forg-bound-2}, and \eqref{eq:non-trans-bound}, one can see that $s_0$ affects unforgeability and non-transferability bounds in opposite ways: Increasing $s_0$ improves the non-transferability bound but 
weakens the unforgeability one.
Thus, we arrive at the necessity of optimizing $s_0$ with respect to the desired security parameters in practical realizations of the scheme.

\subsection{Non-repudiation} \label{sec:repudiation}

The next security property we consider is non-repudiation.
It states that the signer is not able to refuse authorship of a signed message.
One can see that this condition closely relates to the transferability.
In line with the design of the USS scheme in Ref.~\cite{Amiri2018}, we consider the repudiation issue in the context of the majority vote process.

\begin{definition}[repudiation].
	Suppose that a coalition of dishonest nodes (${\cal P}_0$ may or may not belong to this coalition) outputs a message-signature pair $(m, \sigma)$.
	We say that a repudiation event happens if some honest (internal or external) recipient accepts $(m,\sigma)$ at the verification level $l\geq 1$, but the majority vote results in ${\sf MV}(m,\sigma)=\oslash$.
\end{definition}

We state that the probability of a repudiation event can be upper bounded by the same expression as the non-transferability.

\begin{theorem}[probability of repudiation]. \label{thm:reputication}
For the USS scheme, described in the text, the following holds true:
    \begin{equation}\label{eq:repudiation-bound}
        \Pr[\text{repudiation}]\leq \Pr[\text{non-transferability}].
    \end{equation}
\end{theorem}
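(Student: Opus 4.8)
The plan is to prove the event containment $\{\text{repudiation}\}\subseteq\{\text{non-transferability}\}$; the inequality~\eqref{eq:repudiation-bound} then follows at once from monotonicity of probability, so no new probabilistic estimate is required.

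Assume a repudiation event occurs for a pair $(m,\sigma)$ output by the dishonest coalition: some honest recipient $\mathcal{R}$ (internal or external) accepts $(m,\sigma)$ at a verification level $l\ge 1$, while the internal majority vote returns ${\sf MV}(m,\sigma)=\oslash$. The first and main step is to extract from ${\sf MV}(m,\sigma)=\oslash$ the existence of an honest internal recipient that rejects $(m,\sigma)$ at every level. By definition ${\sf MV}(m,\sigma)=\oslash$ means $\sum_{j=1}^N\widetilde{\sf vote}_j\le N/2$. For an honest internal recipient the broadcast vote coincides with its true outcome, ${\sf vote}_j=l_{{\rm ver},j}(m,\sigma)$, and an honest initiator necessarily has $l_{{\rm ver}}=0$; hence if \emph{every} honest internal recipient had $l_{{\rm ver},j}(m,\sigma)\ge 0$, each would contribute $\widetilde{\sf vote}_j=1$, giving $\sum_{j=1}^N\widetilde{\sf vote}_j\ge N-\omega>N/2$ since $\omega<N/3<N/2$ by Theorem~\ref{thm:trade-off}, which would force ${\sf MV}(m,\sigma)=\checkmark$. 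This contradiction shows that some honest internal recipient ${\cal P}_{j^\star}$ satisfies $l_{{\rm ver},j^\star}(m,\sigma)=-1$, i.e. it does not accept $(m,\sigma)$ at any verification level.

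It remains to read off a non-transferability event. Since $l\ge 1$ we have $l-1\ge 0>-1$, so in particular ${\cal P}_{j^\star}$ does not accept $(m,\sigma)$ at any level $l'\ge l-1$; and ${\cal P}_{j^\star}\neq\mathcal{R}$, trivially when $\mathcal{R}$ is external and because $l_{{\rm ver},\mathcal{R}}(m,\sigma)\ge 1\neq-1$ when $\mathcal{R}$ is internal. Thus we have exhibited two distinct honest recipients, one accepting $(m,\sigma)$ at a level $l\ge 1$ and the other not accepting it at any level $\ge l-1$, which is exactly the non-transferability event. I expect the only delicate point to be the bookkeeping in the first step, namely keeping the adversarially chosen broadcast votes of the (at most $\omega$) dishonest recipients separate from the truthful votes of the honest ones and using $\omega<N/2$ to guarantee that the count of honest recipients with verification level $-1$ is strictly positive; the rest is a direct comparison of the two event definitions.
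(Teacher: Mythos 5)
Your proof is correct and follows essentially the same route as the paper's: the paper's (one-sentence) argument likewise uses $\omega<N/2$ to conclude that ${\sf MV}(m,\sigma)=\oslash$ forces some honest internal recipient to reject $(m,\sigma)$ outright, which together with the honest acceptor at level $l\geq 1$ is precisely a non-transferability event. You have merely filled in the vote-counting details that the paper leaves implicit.
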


We conclude this subsection by stating the relation between the results of the majority vote process performed by internal recipients and the majority vote result verification by external recipients.

\begin{theorem}[proper operation of the majority vote results verification]. \label{thm:majority_vote}
    If the majority vote result verification with respect to the message-signature pair $(m,\sigma)$ is run by an external recipient ${\cal E}_i$ in the absence of a corresponding majority vote run by internal recipients, then it outputs ${\sf MV}^{\sf ext}_{m,\sigma}=\bot$.
    Otherwise, ${\sf MV}^{\sf ext}_i(m,\sigma)={\sf MV}(m,\sigma)$.
\end{theorem}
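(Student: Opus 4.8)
The plan is to reduce the statement to a counting argument resting on two ingredients: the size of the set of queried internal recipients and the broadcast-agreement property of the majority vote recalled in Sec.~\ref{sec:zerolevel}. Throughout I take ${\cal E}_i$ to be honest — for a dishonest ${\cal E}_i$ the symbol ${\sf MV}^{\sf ext}_i(m,\sigma)$ carries no operational meaning — so ${\cal E}_i$ really does pick a set $\Omega\subset\{1,\ldots,N\}$ of exactly $|\Omega|=2\omega+1$ internal recipients to which it is connected, which is possible by the connectivity assumption of Fig.~\ref{fig:network}. Since at most $\omega$ internal nodes are dishonest, $\Omega$ contains at least $\omega+1$ honest internal recipients, and each of them answers ${\cal E}_i$'s majority-vote-result request (this request is handled unconditionally: unlike the ordinary delegated-verification request, it involves no block-list gate). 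Finally, I would invoke the broadcast protocol: if a majority vote with respect to $(m,\sigma)$ has been run, every honest internal recipient observes it and computes one and the same value ${\sf MV}(m,\sigma)\in\{\checkmark,\oslash\}$; if no such majority vote has been run, no honest internal recipient has a record of one. Hence each of the $\geq\omega+1$ honest members of $\Omega$ returns the common value — namely ${\sf MV}(m,\sigma)$ in the first situation and $\bot$ in the second.

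With this in hand the proof splits into the obvious cases. If no corresponding internal majority vote took place, the $\geq\omega+1$ honest responders all return $\bot$, so among the $2\omega+1$ collected responses at most $\omega$ equal $\checkmark$ and at most $\omega$ equal $\oslash$, i.e.\ $\#[\checkmark]\leq\omega$ and $\#[\oslash]\leq\omega$; neither of the first two branches in the definition of ${\sf MV}^{\sf ext}_i(m,\sigma)$ fires, and the third branch yields ${\sf MV}^{\sf ext}_i(m,\sigma)=\bot$, which is the first claim. If instead a majority vote with outcome ${\sf MV}(m,\sigma)=\checkmark$ took place, the $\geq\omega+1$ honest responders return $\checkmark$, so $\#[\checkmark]\geq\omega+1$, while the at most $\omega$ remaining responses force $\#[\oslash]\leq\omega<\omega+1$; the first branch fires and ${\sf MV}^{\sf ext}_i(m,\sigma)=\checkmark$. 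The case ${\sf MV}(m,\sigma)=\oslash$ is symmetric, giving $\#[\oslash]\geq\omega+1$, $\#[\checkmark]\leq\omega$, so the first branch fails, the second fires, and ${\sf MV}^{\sf ext}_i(m,\sigma)=\oslash$. In both subcases ${\sf MV}^{\sf ext}_i(m,\sigma)={\sf MV}(m,\sigma)$, completing the proof.

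The computation itself is a pigeonhole count, so the real work — and the step I would make fully explicit rather than gloss over — is the appeal to the broadcast protocol to certify that ``whether a majority vote on $(m,\sigma)$ was run, and with what outcome'' is genuine common knowledge among the honest internal recipients, so that the $\geq\omega+1$ honest responses inside $\Omega$ are guaranteed to coincide; together with the minor but essential observation that the majority-vote-result query is always answered by an honest internal recipient, with no block-list interference to spoil the count. Everything else follows from $|\Omega|=2\omega+1$ and the bound $\omega$ on the number of dishonest internal nodes.
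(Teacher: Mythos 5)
Your proposal is correct and follows essentially the same route as the paper, which simply observes that the majority of the $2\omega+1$ queried internal recipients are honest and, by the agreement property of the broadcast protocol, all return the common value ($\bot$ if no vote occurred, ${\sf MV}(m,\sigma)$ otherwise), so the threshold $\omega+1$ is met exactly as you compute. The paper dismisses this as trivial in one sentence; your version merely makes the pigeonhole count and the appeal to broadcast agreement explicit.
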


\subsection{Security of block lists operation}

The final security statement is related to a block lists operation.

\begin{definition}[false blocking].
    We say that a false blocking event happens if some honest internal or external recipient appears on a block list of some other honest internal or external recipient.
\end{definition}

It appears that the probability of this undesirable event can be upper bounded by the probability of forgery and non-transferability.

\begin{theorem}[proper block lists operation]. \label{thm:block-lists}
    For the USS scheme, described herein, the following upper bound on the probability of a false blocking event holds true:
    \begin{multline} \label{eq:false-blocking}
        \Pr[\text{false blocking}] \\ \leq \Pr[\text{forgery}]+\Pr[\text{non-transferability}].
    \end{multline}
\end{theorem}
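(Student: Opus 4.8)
The plan is to trace through every line of the protocol where a node can be added to a block list or where a delegated-verification counter ${\sf cnt}_{i',i}$ is incremented, and to show that each such event, when it involves two honest parties, forces either a forgery event or a non-transferability event in the sense of the earlier definitions. Since there are exactly three such places — Step~4 of internal verification, Step~4 of external (delegated) verification, and the counter increment in Step~2 of delegated verification — it suffices to handle these three cases and then combine them with a union bound. I would open the proof by fixing an honest recipient $\mathcal{A}$ (internal or external) that places an honest recipient $\mathcal{B}$ on its block list, and asking what $\mathcal{B}$ must have sent.

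First I would treat the case where $\mathcal{A}=\mathcal{P}_i$ is internal. By Step~4 of internal verification, $\mathcal{B}$ is blocked only because $\mathcal{P}_i$ received a package $(m,\sigma,l_{\rm rec})$ from $\mathcal{B}$ with $l_{{\rm ver},i}(m,\sigma) < l_{\rm rec}-1$. Now split on whether $m$ equals the message $m$ that the honest signer actually signed (if the signer is honest; if the signer is dishonest, every accepted pair is fair game for a non-transferability argument). If $m^\star\neq m$, then since $\mathcal{B}$ is honest it must itself have accepted $(m^\star,\sigma^\star)$ at some level $\geq l_{\rm rec}-1\geq 0$ before forwarding (an honest node only forwards at $l_{{\rm ver}}\geq 1$, and with the attached level equal to its own verification level), so some honest recipient accepted a pair on $m^\star\neq m$ at level $\geq 0$ — a forgery event. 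If instead $m^\star=m$, then $\mathcal{B}$ accepted $(m,\sigma)$ at level $l_{\rm rec}-1\geq\,$(its forwarded level$)-1$, hence at level $\geq 1$ (honest nodes forward only at level $\geq 1$), while $\mathcal{P}_i$ failed to accept it at level $l_{\rm rec}-1$; by the definition of non-transferability (with the dishonest coalition being whoever corrupted $\mathsf{Sig}_m\to\sigma$, possibly empty modification of the signer's output), this is exactly a non-transferability event. The external case $\mathcal{A}=\mathcal{E}_i$ is identical line-for-line, using Step~4 of delegated verification and the fact (Theorem-level consequence of the delegated-verification rules, or directly: the majority of the $2\omega+1$ queried internal recipients are honest) that $l_{{\rm ver},i}^{\rm ext}$ faithfully reflects the honest internal verdicts up to the $\pm 1$ slack already built into Step~2--3.

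The remaining, and I expect most delicate, case is the counter-based blocking in Step~2 of delegated verification: ${\cal P}_{i'}$ blocks an honest external ${\cal E}_i$ only after ${\sf cnt}_{i',i}$ reaches $M+\omega$, i.e. after $M+\omega$ delegated requests from ${\cal E}_i$ on which ${\cal P}_{i'}$ returned $l_{{\rm ver},i'}<l_{\rm rec}-2$. The key counting observation is that an honest ${\cal E}_i$ only issues a delegated-verification request on a package that it genuinely received from some neighbor and is trying to verify; each distinct package can be blamed on at most one ``upstream'' sender, of which there are at most $N+M$ relevant ones in a one-time run, and for each honest upstream sender the event ``${\cal P}_{i'}$ reports $l_{{\rm ver},i'} < l_{\rm rec}-2$ on a pair that an honest party vouched for at level $\geq l_{\rm rec}-1$'' is again a forgery (if the message differs from the signed one) or a non-transferability event (if it agrees). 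So if no forgery and no non-transferability event occurs, every honest-origin request contributes only to pairs whose honest-reported level is consistent, i.e. contributes $0$ to the counter; hence at most $\omega$ (the number of dishonest internal nodes that could be the sender $\mathcal{R}={\cal P}_j$ relayed through) plus at most $M$ (dishonest externals as origins) $=M+\omega-1<M+\omega$ increments can accumulate from dishonest origins, so the threshold $M+\omega$ is never reached, contradiction. The bookkeeping here — making the ``at most one upstream sender per package'' and ``$\le M+\omega-1$ dishonest-origin increments'' claims airtight against adaptive adversaries reusing $(m,\sigma)$ — is the step I would expect to require the most care, and it is precisely why the designers chose the threshold $M+\omega$.

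Finally I would assemble the three cases: conditioned on the complement of $(\{\text{forgery}\}\cup\{\text{non-transferability}\})$, none of the three blocking mechanisms can fire between two honest parties, so
\begin{equation}
    \Pr[\text{false blocking}] \le \Pr[\text{forgery}] + \Pr[\text{non-transferability}],
\end{equation}
which is~\eqref{eq:false-blocking}. The only non-routine input beyond the protocol description is Theorems~\ref{thm:forgery} and~\ref{thm:non-transferability} being used as black boxes to name these events; everything else is casework on the four ``Step~4''/``Step~2'' branches plus the union bound.
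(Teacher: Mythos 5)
Your overall strategy is the same as the paper's: enumerate the blocking mechanisms, show that rejection-based blocking between honest parties is a non-transferability (or forgery) event, and show that counter-based blocking of an honest external recipient cannot reach the threshold $M+\omega$ absent forgery and non-transferability, then take a union bound. The first two cases are fine (the paper folds them into a single non-transferability case, since its Definition of non-transferability applies to any pair output by the coalition regardless of whether the message matches the signed one, but your finer forgery/non-transferability split lands on the same bound).

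The one genuine gap is in the counter argument, and it sits exactly where you flagged uncertainty. Your counting claim is that at most $\omega$ plus at most $M$ increments ``can accumulate from dishonest origins,'' but nothing you wrote caps the number of increments \emph{per} dishonest origin: ``each distinct package can be blamed on at most one upstream sender'' bounds increments per package, while a single dishonest neighbor could in principle feed ${\cal E}_i$ many distinct packages, each triggering a delegated verification and a counter increment at ${\cal P}_{i'}$, driving ${\sf cnt}_{i',i}$ to $M+\omega$ with no forgery or non-transferability. The missing link, which the paper makes explicit, is that whenever ${\cal P}_{i'}$ increments the counter (i.e., $l_{{\rm ver},i'}<l_{\rm rec}-2$), then absent forgery and non-transferability every honest queried internal recipient returns a level $\leq l_{\rm rec}-2$, so the delegated result satisfies $l^{\rm ext}_{{\rm ver},i}\leq l_{\rm rec}-2<l_{\rm rec}-1$ and the honest ${\cal E}_i$ itself rejects the package and puts the sender into ${\sf block\_list}^{\sf ext}_i$ at Step~4; by Step~0 that sender can never cause another increment. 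This is what reduces the count to the number of \emph{distinct} dishonest senders, which is at most $(M-1)+\omega$ (note: $M-1$, not $M$, since the honest ${\cal E}_i$ is itself one of the $M$ externals — your ``$\omega+M=M+\omega-1$'' arithmetic only comes out right once this is fixed), strictly below the threshold $M+\omega$. With that link inserted, your proof closes and coincides with the paper's.
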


Thus, according to Theorems~\ref{thm:forgery} and~\ref{thm:non-transferability} the probability of false blocking is upper bounded by a function decreasing exponentially with $k$.

\section{Performance analysis}\label{sec:performance}

Here we discuss practical aspects of implementing the considered QKD-based USS scheme.
We are particularly interested in the consumption of symmetric keys, generated within the internal QKD subnetwork and used for providing OTP encryption at the preliminary distribution stage.

Let us introduce a security parameter $\varepsilon_{\rm tot}$ that bounds probabilities of successful forgery and non-transferability events as follows:
\begin{equation} \label{eq:epsilon-tot}
	\begin{aligned}
	    &\Pr[\text{forgery}] \leq \frac{\varepsilon_{\rm tot}}{2},\\
	    &\Pr[\text{non-transferability}] \leq \frac{\varepsilon_{\rm tot}}{2}.
	\end{aligned}
\end{equation}
From the practical point of view, it is reasonable to fix the value of $\varepsilon_{\rm tot}$ 
at the level of the QKD security parameter, which is commonly of the order of $10^{-9} -- 10^{-12}$~\cite{Kiktenko2016}.

There are two basic types of links within the internal subnetwork: 
(i) links between the signer and  internal recipients and 
(ii) links between internal recipients.
We denote these types by sr and rr, respectively.
The key consumption for each type of links in the distribution stage is given by
\begin{equation}\label{eq:keyconsumptions}
	L_{{\rm sr}} = Nky, \quad L_{{\rm rr}} = 2k(y+\lceil \log_2Nk \rceil),
\end{equation}
where, again $N$ is the number of internal recipients, $y$ is the length of a key defining an element from the employed AS2U family, and $k$ is the number of single block authentication keys appearing in decaying exponents of Eqs.~\eqref{eq:forg-bound-1}, \eqref{eq:forg-bound-2}, and \eqref{eq:non-trans-bound}.
We also note that $y$ scales logarithmically with the maximal message length $a$.
The total key consumption for all links in the internal subnetwork can be calculated as follows:
\begin{equation} \label{eq:keyconsumptiontot}
	L_{\rm tot} = NL_{{\rm sr}} + \frac{N(N-1)}{2}L_{{\rm rr}}.
\end{equation}
Substituting Eq.~\eqref{eq:keyconsumptions} into Eq.~\eqref{eq:keyconsumptiontot}, we see that the total key consumption generally scales as $N^2$ with the growth of internal subnetwork size $N$.
The dependence on the size $M$ of the external subnetwork appears in the upper bound on the forgery event and is logarithmic.

The lengths of required symmetric keys, given by $L_{\rm sr}$ and $L_{\rm rr}$, limit the rate ${\sf rate}_{\rm USS}^{\max}$ at which sets of signing and verification keys can be generated.
This rate can be calculated as
\begin{equation}\label{eq:rate-uss}
    {\sf rate}_{\rm USS}^{\max}=\min\left(\frac{{\sf rate}_{\rm sr}^{\min}}{L_{\rm sr}},
    \frac{{\sf rate}_{\rm rr}^{\min}}{L_{\rm rr}}\right),
\end{equation}
where
\begin{equation}
    {\sf rate}_{\rm sr}^{\min}:=\min\limits_{i>0} ({\sf rate}_{0i}), \quad
    {\sf rate}_{\rm rr}^{\min}:=\min\limits_{\begin{smallmatrix}
    i\neq j \\
    i,j>0
    \end{smallmatrix}} ({\sf rate}_{ij})
\end{equation},
and ${\sf rate}_{ij}$ is the secret key generation rate of a QKD link connecting ${\cal P}_i$ and ${\cal P}_j$ (recall that ${\cal P}_0$ is the signer in our scheme).
Assuming that all the nodes are connected with same QKD devices and a transmittance between the nodes is determined only by the distance between them, the secret key generation rates can be approximated as
\begin{equation} \label{eq:rate-for-distance}
    {\sf rate}_{ij}={\sf rate}^{(0)}\eta({\sf dist}_{ij})={\sf rate}^{(0)}e^{-\gamma {\sf dist}_{ij}},
\end{equation}
where ${\sf rate}^{(0)}$ is the secret key generation rate at zero distance, $\eta({\sf dist}_{ij})=\exp(-\gamma {\sf dist}_{ij})$ is the transmittance given as a function of the distance ${\sf dist}_{ij}$ between ${\cal P}_i$ and ${\cal P}_j$, and $\gamma$ is a loss coefficient (e.g. for a standard optical fiber it corresponds to a value of 0.2 dB/km).
Substituting~\eqref{eq:rate-for-distance} into~\eqref{eq:rate-uss} and approximating $L_{\rm rr}$ as $2ky$, we obtain
\begin{equation}
    {\sf rate}_{\rm USS}^{\max} \approx \frac{{\sf rate}^{(0)}}{2Nky}\min(2e^{-\gamma{\sf dist}^{\max}_{\rm sr}}, Ne^{-\gamma{\sf dist}^{\max}_{\rm rr}}),
\end{equation}
where
\begin{equation}\label{eq:total-rate-from-dist}
     {\sf dist}^{\max}_{\rm sr}:=\max\limits_{i>0} ({\sf dist}_{0i}), \quad
    {\sf dist}_{\rm rr}^{\max}:=\max\limits_{\begin{smallmatrix}
    i\neq j \\
    i,j>0
    \end{smallmatrix}} ({\sf dist}_{ij}).
\end{equation}
One can expect from~\eqref{eq:total-rate-from-dist} that for large $N$, the rate of key generation for the considered USS scheme is limited by the maximal distance between the signer and recipients ${\sf dist}^{\max}_{\rm sr}$.

In order to compute the key consumption, we solve numerically the constraint optimization problem of minimizing $L_{\rm tot}$ with respect to the value of $k$, tag length $b$, and tolerable fraction of incorrect tags $s_0$, 
keeping fulfilment of the inequalities~\eqref{eq:epsilon-tot} for prefixed values of message length $a$, security parameter $\varepsilon_{\rm tot}$, 
number of internal recipients $N$, number of internal recipients $M$, maximal number of dishonest nodes in the internal subnetwork $\omega$, and maximal verification level $l_{\max}$.
We perform the optimization as follows.
We fix the value of $b$, and then by using the inequalities~\eqref{eq:epsilon-tot} and bounds~\eqref{eq:forg-bound-1},~\eqref{eq:forg-bound-2}, and ~\eqref{eq:non-trans-bound} obtain appropriate values of $k$ and $s_0$.
More concretely, we find $s_0$ and $k$ such that bounds on forgery and non-transferability events are almost the same and the sum of bounds is approximately $\varepsilon_{\rm tot}$, in accordance with~\eqref{eq:epsilon-tot}.
Then we calculate the corresponding key consumption using Eqs.~\eqref{eq:keyconsumptions} and~\eqref{eq:keyconsumptiontot}.
The above procedure is repeated for values of $b$ taken from a given range $b$ (for our purpose we considered $b\in\{2,3,\ldots,20\}$), 
and the value of $b$, together with the corresponding values of $k$ and $s_0$, providing the minimal total key consumption $L_{\rm tot}$ is chosen.
We provide full details of the optimization procedure for finding $k$, $s_0$, and $b$ in Appendix~\ref{sec:app:optimization}.

To demonstrate the results, we consider two regimes of the USS scheme operation that corresponds to extreme cases of the trade-off between $l_{\max}$ and $\omega$ given by Eq.~\eqref{eq:trade-off}. 
The first one, which we call the minimal transferability regime, is characterized by the minimal nontrivial value of $l_{\max}=1$ and the maximal possible value of $\omega=\lceil N/3\rceil-1$.
The second one, which we call the maximal transferability regime, is characterized by $l_{\max} = N-3$, and $\omega=1$.

First, we present the resulting key consumption as a function of $N$ for the case of $a=8$ Mbits, $\varepsilon_{\rm tot}=10^{-10}$, and $M=5$ in Fig.~\ref{fig:key-consumption}.
To demonstrate the importance of additional optimization with respect to the authentication tag length $b$, we also show key consumption for the minimal tag length $b=2$, considered in the seminal paper in ~\cite{Amiri2018}.

One can see that the key consumption is strongly affected by the value of $l_{\max}$, and the maximal transferability regime appears to be the most costly.
The reason for this is the fact that the value of $l_{\max}$ drastically affects the prefactor of $k$ in the decaying exponent in the non-transferability bound~\eqref{eq:non-trans-bound}.
More precisely, to keep the same upper bound on $\Pr[\text{non-transferability}]$ while increasing $l_{\max}$ from $l_{\max}=l_1$ to some $l_{\max}=l_2$, one has to provide an increase of $k$ from $k=k_1$, which corresponded to $l_{\max}=l_1$, up to
\begin{equation}
    k=\left(\frac{l_2}{l_1}\right)^2k_1.
\end{equation}
In the maximal transferability regime  $l_{\max}$ grows with $N$ which results in the significant increase of $L_{\rm sr}$ and $L_{\rm rr}$ due to increase of $k$.

\begin{figure}[]
   \centering
    \includegraphics[width=\linewidth]{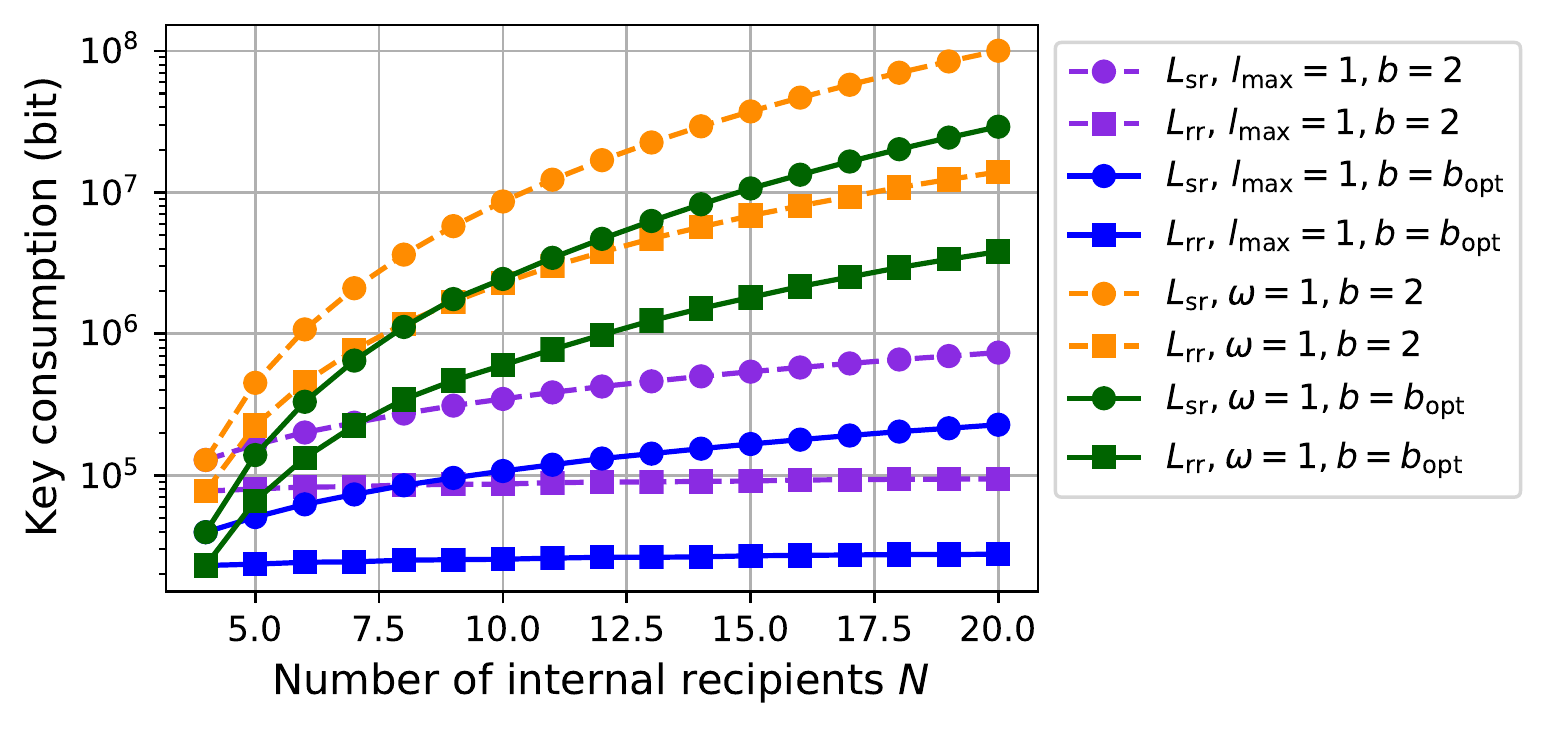}
    \vskip 2mm
    \caption{Optimized secret key consumption required for OTP encryption at the preliminary distribution stage as a function of the number of internal recipients $N$ for different values of $l_{\rm max}$ and $\omega$ satisfying trade-off~\eqref{eq:trade-off}. 
    The signed message is of length $a=8$ Mbits, the security parameter $\varepsilon_{\rm tot}=10^{-10}$, and the number of external recipients $M=5$.
    The results for the fixed ($b=2$) and the optimized ($b=b_{\rm opt}$) tag length are shown.}

    \label{fig:key-consumption}
\end{figure}

To provide a more detailed picture, we also present the results of optimization for some particular configurations of the USS scheme in Table~\ref{tab:rslts}.
We show the results both for the case full optimization, where $k$, $s_0$, and $b$ are optimized, and for the case where $k$ and $s_0$ are optimized with respect to the fixed value of $b=2$.
One can see that even in the worst  case considered, the key consumption is of the order of 10 Mbits per link.
Also note that increasing the message length from 8 Mbits to 32 Mbits, as well as increasing the security level from $\varepsilon_{\rm tot}=10^{-10}$ to $\varepsilon_{\rm tot}=10^{-12}$, and increasing the number of external recipients from $M=10$ to $M=100$ has a mild effect on key consumption.
Given the fact that modern QKD devices demonstrate the capacity of key generation of several Mbits per second~\cite{Yuan2018}, 
we can conclude that the developed QKD-assisted USS scheme appears to be suitable for signing one message per about 10 s in moderate QKD networks consisting of about ten nodes.

\begin{table*}[ht]
	\centering
	\begin{tabular}{|c|c|c|c|c|c||c|c|c|c|c|c||c|c|c|c|c|}
    \hline
    \multicolumn{6}{|c||}{Input parameters} & \multicolumn{6}{c||}{Results of optimization for $b=b_{\rm opt}$} & \multicolumn{5}{c|}{Results of optimization for $b=2$}
    \\ \hline
        $N$ & $M$ & $\omega$ & $l_{\max}$ & $a$ & $\varepsilon_{\rm tot}$ & $k$ & $b$ & $s_0$ & $L_{{\rm sr}}$ & $L_{{\rm rr}}$ & ${\sf sig\_len}$ & $k$ & $s_0$ & $L_{{\rm sr}}$ & $L_{{\rm rr}}$ & ${\sf sig\_len}$\\ \hline
        4 & 0 & 1 & 1 & 8 Mbits & $10^{-10}$ & 125 & 7 & 0.658 & 37.6 kbits & 21.5 kbits & 151 kbits &
        482 & 0.334 & 121 kbits & 72.5 kbits & 482 kbits \\
        4 & 10 & 1 & 1 & 8 Mbits & $10^{-10}$ & 136 & 6 & 0.630 & 39.3 kbits & 22.8 kbits & 157 kbits & 
        510 & 0.325 & 128 kbits & 76.7 kbits & 510 kbits \\ 
        10 & 10 & 1 & 7 & 8 Mbits & $10^{-10}$ & 2947 & 9 & 0.996 & 2.33 Mbits & 587 kbits & 23.3 Mbits & 
        13517 & 0.465 & 8.25 Mbits & 2.19 Mbits & 82.5 Mbits \\ 
        10 & 10 & 3 & 1 & 8 Mbits & $10^{-10}$ & 147 & 6 & 0.632 & 106 kbits & 25.3 kbits & 1062 kbits & 
        549 & 0.326 & 343 kbits & 85.8 kbits & 3.35 Mbits \\
        10 & 10 & 2 & 2 & 8 Mbits & $10^{-10}$ & 403 & 6 & 0.766 & 291 kbits & 70.8 kbits & 2.84 Mbits & 
        1511 & 0.395 & 944 kbits & 242 kbits & 9.22 Mbits \\ 
        10 & 10  & 2 & 2 & 32 Mbits & $10^{-10}$ & 403 & 6 & 0.766 & 307 kbits & 74.0 kbits & 3.00 Mbits & 
        1511 & 0.395 & 974 kbits & 248 kbits & 9.51 Mbits \\ 
        10 & 10 & 2 & 2 & 8 Mbits & $10^{-12}$ & 475 & 6 & 0.758 & 343 kbits & 83.5 kbits & 3.35 Mbits & 1783 & 0.391 & 1.09 Mbits & 285 kbits & 10.8 Mbits \\ 
        10 & 100 & 2 & 2 & 8 Mbits & $10^{-10}$ & 414 & 6 & 0.756 & 299.2 kbits & 72.8 kbits & 2.92 Mbits & 1552 & 0.39 & 970 kbits & 249 kbits & 9.47 Mbits\\ 
        \hline
	\end{tabular}
	\caption{Results of the numerical optimization for different configurations of the developed QKD-assisted USS scheme.
	The results for the fixed $(b=2)$ and the optimized $(b=b_{\rm opt})$ tag length are shown.
	Here $L_{\rm sr}$ and $L_{\rm rr}$ are lengths of secret key required to be distributed between the signer and each of the internal recipients and between each pair of recipients, respectively, and ${\sf sig\_len}$ is the resulting signature length.}
	\label{tab:rslts}
\end{table*}

Finally, we recall that the workflow of the considered scheme is based on using perfect authenticated channels between all the parties in the network, and establishing these channels implies symmetric key consumption as well.
Recent progress in the development of lightweight unconditionally secure authentication schemes~\cite{Kiktenko2020} shows that using a key recycling technique~\cite{Portmann2014} allows decreasing key consumption to provide an $(
\varepsilon_{\rm QKD}+\varepsilon_{\rm auth})$-secure authentic channel down to 
\begin{equation}
    L_{\rm auth}=\lfloor-\log_2 \varepsilon_{\rm auth} \rfloor+1 
\end{equation}
bits per message, where 
$\varepsilon_{\rm QKD}$ is the security level of the employed symmetric key obtained with QKD (see Ref.~\cite{Kiktenko2020} for more details).
So, even considering $\varepsilon_{\rm auth}$ several orders smaller than $\varepsilon_{\rm tot}$, we have the key consumption for the single authenticated channel to be of the order of tens of bits, which is practically negligible compared to the consumption in the main USS scheme (e.g. in the case of $\varepsilon_{\rm auth}=10^{-14}$ one has $L_{\rm auth}=47$ bits).

Note that $L_{\rm auth}$ ($2L_{\rm auth}$) bits of symmetric keys from sr (rr) links at the distribution stage and then the $L_{\rm auth}$ key from each link on the route of a message-signature transfer through the global network are required.
The broadcast channels in the majority vote process require no more than $(\omega+1)L_{\rm auth}$ bits of  symmetric keys from rr links, and $N$ broadcast channels are required in total.
Thus, e.g., for $N=10$ and $\varepsilon_{\rm auth}=10^{-14}$ the resulting key consumption from a link does not exceed several kbits, which is much less than the key consumption for providing OTPs at the preliminary distribution stage.
We note that the additional optimization of the key consumption via paralleling of broadcast channels is also possible.

\section{Conclusion and outlook}\label{sec:conclusion}

In the present work, we have developed a universal hashing-based QKD-assisted multiparty USS scheme.
The scheme operates in a QKD network consisting of two subnetworks: a moderately trusted internal one, where the number of malicious nodes is upper bounded by a threshold $\omega$, and an untrusted external one, where the number of malicious nodes is unbounded.
The signer belongs to the internal subnetwork, while the generated message-signature pair can be securely forwarded through the whole network.
The absence of a trust assumption with respect to the external subnetwork is compensated by (i) stronger requirements on the connectivity in which the nodes in the internal subnetwork have to be connected in an all-to-all fashion, while each external recipient has to be connected only with $2\omega+1$ internal recipients; (ii) much higher symmetric key consumption from QKD-links within the internal subnetwork; and (iii) the necessity of the internal recipients' assistance in the verification process run by the external recipient.
The secret key consumption has logarithmic growth with a maximal signed message length that makes the scheme suitable for practical use.

We also have conducted a security analysis of the scheme and adjusted the workflow of the scheme to prevent the possibility of the adversary (a coalition of adversaries) decreasing transferability of messages, conducting forgery, and non-repudiation attacks.

We have performed numerical optimization of the developed scheme parameters to minimize the secret key consumption. 
The results of the optimization show that the key consumption level for networks of about ten nodes is compatible with the capabilities of contemporary QKD devices.
We hope that the obtained results will bring us closer to the deployment of USS schemes in real QKD networks.

As the main shortcoming of the protocol we note the restriction on a number of malicious nodes in the internal subnetwork: 
In order to tolerate $\omega$ malicious nodes in the case of the minimal transferability level $l_{\max}=1$, it is necessary to have $N>3\omega$ nodes in the internal subnetwork.
It is noteworthy that this bound coincides with the one for the unconditionally secure Byzantine agreement protocol~\cite{Pease1980, Lamport1982}.
An important open question is whether this bound can be improved for the  type of QKD-assisted USS schemes considered.

We also note that an interesting direction for further study is consideration of the developed scheme in the framework of unconditionally secure distributed ledgers~\cite{Kiktenko2018,Fedorov2018}. 
In particular, the need for the assistance from internal subnetwork recipients for the verification of message-signature pairs by external subnetwork recipients resembles the idea of the proof of an authority consensus mechanism in blockchains. 
In this way, the consideration of the QKD-assisted USS scheme for an unconditionally secure consensus protocol is one of the potential avenues for future research.

\section*{Acknowledgments}
We thank P. Wallden for fruitful discussions and useful comments. 
This work was funded by Russian Federation represented by the Ministry of Science and Higher Education (Grant No. 075-15-2020-788).

\appendix

\section{Summary of modifications} \label{sec:app:differences}

Here we explicitly point out the main modification in the QKD-assisted USS scheme developed herein compared to the original scheme described in Ref.~\cite{Amiri2018}. 

\subsection{Choice of verification levels }

In the original approach, the verification level $l$ takes values from the set $\{-1,0,1,\ldots, l_{\max}\}$. 
The special $-1$th verification level is used in the majority vote dispute resolution process only.
The provided security analysis implies a secure transferability from $l=0$ to $l=-1$; however, the unforgeability is provided down to the verification level $l=0$ only. 
As described in the main text, this approach results in the emergence of the vulnerability of the majority vote process.
The idea is that there is a potential possibility for a malicious recipient to produce a message-signature pair that is accepted by honest recipients at the verification level $l=-1$ and then initiate a majority vote process with respect to this pair (recall, that unforgeability is provided down to the verification level $l=0$ only).
Then this pair will be accepted as valid in the majority vote process, though the signer may have nothing to do with it.
Therefore, the unforgeability has to be provided for all verification levels.
It can be achieved either by considering $l=-1$ in the unforgeability proof (actually, this means that $s_0$ has to be replaced by $s_{-1}$ in the expression for the probability of a forgery), or by incrementing $l_{\max}$ and considering $l\in\{0,1,\ldots,l_{\max}\}$.
Our approach is to use the latter approach in our scheme.
\subsection{Choice of $\{s_i\}$}
In the original scheme, the sequence 
\begin{equation} \label{eq:scrit_old}
    s_{l_{\max}}<s_{l_{\max}+1}<\cdots<s_{0}<s_{-1}<\frac{1}{2}
\end{equation}
is considered, while in our scheme we employ the sequence given by Eq.~\eqref{eq:scrit}.
Besides the change of the minimal verification level, there are two differences between Eqs.~\eqref{eq:scrit} and ~\eqref{eq:scrit_old}: (i) In Eq.~\eqref{eq:scrit} all $s_i$ are chosen to be equidistant and (ii) the upper bound for maximal $s_i$ is increased from $\frac{1}{2}$ to $1-2^{1-b}$.
These two modifications are made for the following reasons.

The upper bound on the probability of the non-transferability from $l=i$ to $l=i-1$ depends on $s_i-s_{i-1}$ (the higher $s_i-s_{i-1}$ is, the lower the upper bound is), so it is reasonable to have the same differences $\Delta s =s_i-s_{i-1}$ for all possible $i$, which is realized in our scheme.

The upper bound on the maximal value of $s_i$ ($s_0$ in our case) is increased from $\frac{1}{2}$ up to $1-2^{1-b}$ in order to increase the value of $\Delta s$ and so decrease the bound for a non-transferability attack.
It is achieved due to involving the value of $\epsilon$ of the employed $\epsilon$-AS2U (in our case $\epsilon=2^{1-b}$) in the upper bound on the probability of a successful forgery attack.
In contrast, in the original scheme, the worst-case scenario with $b=2$ was considered.

\subsection{Choice of $T_l$}
In the original scheme the critical number of tests for accepting the signature is given by
\begin{equation} \label{eq:Tcrit_old}
    T_l = \frac{N}{2} + (l+1)(\omega-1),
\end{equation}
while in our scheme we use values of $T_l$ from Eq.~\eqref{eq:Tcrit}.
There are three differences between Eqs.~\eqref{eq:Tcrit} and ~\eqref{eq:Tcrit_old}: The factor $(l+1)$ is replace by $l$, $N/2$ is replaced by $\omega$, and the factor $\omega-1$ is replaced by $\omega$.
The first modification comes from the consideration of $l_{\max}$ verification levels in both the security proof of non-forgeability, and transferability. 
The second modification comes from the updated security proof of non-forgeability.
We note that in our scheme $\omega<N/2$, as shown in Sec.~\ref{sec:acceptability}, so the updated bound is tighter than the original one.
The third modification comes from the consideration of non-transferability. 
In the original work a non-transferability attack is considered to be performed by a coalition which includes the signer (that is, there are no more than $\omega-1$ malicious recipients).
In our approach we consider a more general scenario, where a non-transferability attack is possible even in the case of the honest signer.
For example, the coalition of malicious recipients may try to corrupt a valid signature ${\sf Sig}_m$ for a message $m$ in such a way that ${\cal P}_i$ accepts it at the level $l$, but ${\cal P}_j$ does not accept it at any level $l'\geq l-1$. 
In our design of the scheme, the upper bound on the probability of this event drops exponentially with the parameter $k$.

\section{AS2U family construction}\label{sec:app:ASU2-construction}

Here we describe the construction of the AS2U family employed in our USS scheme.
Our construction is based on results from Ref.~\cite{Bierbrauer1994} and employs a combination of A2U and AS2U families.

\begin{definition}[A2U family].
	Let $\mathcal{A}$, $\mathcal{B}$ and $\mathcal{K}$ be finite sets. 
	A family of functions 
    $\mathcal{F} = \{f_\kappa : \mathcal{A} \rightarrow \mathcal{B}\}_{\kappa\in\mathcal{K}} $
	is called $\varepsilon$-almost 2-universal ($\varepsilon$-A2U) if
	for any distinct $m_1, m_2 \in \mathcal{A}$ and $\kappa$ picked uniformly at random from $\mathcal{K}$,
	\begin{equation}
	    \Pr[f_\kappa(m_1) = f_\kappa(m_1)] \leq \varepsilon.
	\end{equation}
\end{definition}

An AS2U family can be obtained according to the following theorem.
\begin{theorem}[composition of AS2U and A2U families \cite{Bierbrauer1994}].
    Let $\mathcal{F}_1=\{f^{(1)}_\kappa\}_{\kappa\in\mathcal{K}_1}$ be an $\varepsilon_1$-A2U family of functions from $\mathcal{A}_1$ to $\mathcal{B}_1$ and $\mathcal{F}_2=\{f^{(2)}_\kappa\}_{\kappa\in\mathcal{K}_2}$ be an $\varepsilon_2$-AS2U family of functions from $\mathcal{B}_1$ to $\mathcal{B}_2$. 
    Then the family
    \begin{equation}
        \mathcal{F}=\{f_{\kappa_1\kappa_1}\}_{\kappa_1\in\mathcal{K}_1,\kappa_2\in\mathcal{K}_2},
    \end{equation}
    with
    \begin{equation}
        f_{\kappa_1\kappa_1}(\cdot)=f^{(2)}_{\kappa_2}(f^{(1)}_{\kappa_1}(\cdot))
    \end{equation}
    is $(\varepsilon_1 + \varepsilon_2)$-AS2U.
\end{theorem}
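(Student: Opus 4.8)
The plan is to verify directly the two defining properties of an $(\varepsilon_1+\varepsilon_2)$-AS2U family for the composed family $\mathcal{F}$, making essential use of the independence of the keys $\kappa_1$ and $\kappa_2$ and of conditioning on the value of the inner hash $f^{(1)}_{\kappa_1}$. Throughout, write $b_1:=f^{(1)}_{\kappa_1}(m_1)$ and $b_2:=f^{(1)}_{\kappa_1}(m_2)$, which are random variables determined by $\kappa_1$ alone.

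First I would establish property (i), uniformity: fix $m\in\mathcal{A}_1$ and $t\in\mathcal{B}_2$, condition on $\kappa_1$ so that $f^{(1)}_{\kappa_1}(m)$ is some fixed $b\in\mathcal{B}_1$, and invoke property (i) of $\mathcal{F}_2$ to get $\Pr_{\kappa_2}[f^{(2)}_{\kappa_2}(b)=t]=|\mathcal{B}_2|^{-1}$ irrespective of $b$. Averaging over $\kappa_1$ yields $\Pr[f_{\kappa_1\kappa_2}(m)=t]=|\mathcal{B}_2|^{-1}$. This in particular fixes the conditioning denominator appearing in property (ii) to $|\mathcal{B}_2|^{-1}$, so it will suffice to bound the relevant joint probability by $(\varepsilon_1+\varepsilon_2)|\mathcal{B}_2|^{-1}$.

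Next I would handle property (ii): fix distinct $m_1,m_2\in\mathcal{A}_1$ and arbitrary $t_1,t_2\in\mathcal{B}_2$, and bound $\Pr[f_{\kappa_1\kappa_2}(m_1)=t_1\wedge f_{\kappa_1\kappa_2}(m_2)=t_2]$. I would split this joint event according to whether the inner hashes collide, i.e. whether $b_1=b_2$ or $b_1\neq b_2$. On the collision part: for each fixed $b$, using that $\kappa_2$ is independent of $\kappa_1$ and property (i) of $\mathcal{F}_2$, $\Pr[b_1=b_2=b\wedge f^{(2)}_{\kappa_2}(b)=t_1]=\Pr[b_1=b_2=b]\,|\mathcal{B}_2|^{-1}$; summing over $b$ and using the $\varepsilon_1$-A2U property of $\mathcal{F}_1$ (legitimate since $m_1\neq m_2$) bounds this part by $\varepsilon_1|\mathcal{B}_2|^{-1}$ (and it is actually $0$ unless $t_1=t_2$, since one value of $\kappa_2$ cannot map $b$ to two different tags). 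On the non-collision part: for each pair $b\neq b'$, combining properties (i) and (ii) of $\mathcal{F}_2$ gives $\Pr[f^{(2)}_{\kappa_2}(b)=t_1\wedge f^{(2)}_{\kappa_2}(b')=t_2]\leq\varepsilon_2|\mathcal{B}_2|^{-1}$; summing against the distribution of $(b_1,b_2)$ over pairs with $b_1\neq b_2$ bounds this part by $\varepsilon_2|\mathcal{B}_2|^{-1}$. Adding the two parts gives $(\varepsilon_1+\varepsilon_2)|\mathcal{B}_2|^{-1}$, and dividing by $\Pr[f_{\kappa_1\kappa_2}(m_2)=t_2]=|\mathcal{B}_2|^{-1}$ finishes the proof.

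The argument is essentially careful bookkeeping, and I do not expect a genuine obstacle. The only point requiring attention is getting the conditioning exactly right: once $\kappa_1$ is fixed the inner values $b_1,b_2$ are deterministic while $\kappa_2$ remains uniform and independent, so the $\mathcal{F}_2$ properties apply verbatim to those fixed inputs; and the case split into collision/non-collision of $(b_1,b_2)$ must be seen to be exhaustive, with the degenerate subcase $t_1=t_2$ only ever improving the bound. This reproduces the statement of Ref.~\cite{Bierbrauer1994}.
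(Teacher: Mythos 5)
Your proof is correct and complete. Note that the paper itself gives no proof of this statement---it is imported verbatim from Ref.~\cite{Bierbrauer1994}---so there is no in-paper argument to compare against; your conditioning on $\kappa_1$, the reduction of property (ii) to bounding the joint probability by $(\varepsilon_1+\varepsilon_2)|\mathcal{B}_2|^{-1}$ (valid because property (i) pins the conditioning event's probability at exactly $|\mathcal{B}_2|^{-1}$), and the split into inner-hash collision versus non-collision, bounded respectively by the $\varepsilon_1$-A2U property of $\mathcal{F}_1$ and the $\varepsilon_2$-AS2U property of $\mathcal{F}_2$, is exactly the standard argument and all the steps go through.
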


Next we fix two integers $a$ and $b$ and construct a $2^{-b+1}$-AS2U family from $\mathcal{A}\supseteq \{0,1\}^a$ to $\mathcal{B}=\{0,1\}^b$. 
We employ one more theorem.
\begin{theorem}[relation between A2U families and error-correcting codes~\cite{Bierbrauer1994}]. \label{thm:equivalence}
    Let $\mathcal{F}=\{f_\kappa\}_{\kappa\in\mathcal{K}}$ with $\mathcal{K}=\{1,2,\ldots,K\}$ be a family of function from a finite set $\mathcal{A}$ to a finite set $\mathcal{B}$.
    The following statements are equivalent.
   \begin{enumerate}[label=(\roman*)]
        \item $\mathcal{F}$ is $\varepsilon$-A2U for some $\varepsilon>0$.
        \item The set of words $\{(f_1(m), f_2(m), \ldots, f_K(m))\}_{m\in\mathcal{A}}$ forms a code with minimum distance $\varepsilon\geq 1-d/n$.
    \end{enumerate}
\end{theorem}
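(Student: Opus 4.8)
The plan is to make explicit the dictionary between the family $\mathcal{F}$ and a block code of length $n:=K=|\mathcal{K}|$ over the alphabet $\mathcal{B}$, and then to observe that the A2U collision probability is nothing but one minus the relative Hamming distance of the associated codewords. First I would associate with $\mathcal{F}$ the code $\mathcal{C}:=\{c(m):m\in\mathcal{A}\}\subseteq\mathcal{B}^n$, where $c(m):=(f_1(m),f_2(m),\ldots,f_K(m))$; this is precisely the code referred to in statement (ii).

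Next, fix distinct $m_1,m_2\in\mathcal{A}$ and let $d_H$ denote the Hamming distance. Because $\kappa$ is drawn uniformly from $\mathcal{K}=\{1,\ldots,K\}$, the collision probability equals the fraction of coordinates on which $c(m_1)$ and $c(m_2)$ agree; since the number of agreeing coordinates is $n-d_H(c(m_1),c(m_2))$, we get
\[
\Pr[f_\kappa(m_1)=f_\kappa(m_2)] = 1-\frac{d_H(c(m_1),c(m_2))}{n}.
\]
Hence $\mathcal{F}$ is $\varepsilon$-A2U if and only if $1-d_H(c(m_1),c(m_2))/n\le\varepsilon$ for every pair of distinct messages, i.e. if and only if $\min_{m_1\ne m_2}d_H(c(m_1),c(m_2))\ge n(1-\varepsilon)$. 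The left-hand side is exactly the minimum distance $d$ of $\mathcal{C}$, so the condition becomes $d\ge n(1-\varepsilon)$, equivalently $\varepsilon\ge 1-d/n$, which is statement (ii). Both implications follow from this single equivalence, so there is no need to argue the two directions separately.

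The only delicate point — and the closest thing to an obstacle — is the degenerate case in which two distinct messages map to the same codeword: then $d=0$ and the displayed identity forces $\varepsilon\ge 1$, so the statement is vacuous unless one restricts to nontrivial families with $\varepsilon<1$, for which the encoding $m\mapsto c(m)$ is automatically injective and $\mathcal{C}$ is a genuine code with $|\mathcal{A}|$ codewords. I would add a sentence to this effect, and also clarify that the phrase ``minimum distance $\varepsilon\ge 1-d/n$'' in (ii) should be read as ``minimum distance $d$ satisfying $\varepsilon\ge 1-d/n$'', with $n=K$. Everything else is a direct unwinding of the two definitions, and the construction of the desired $2^{-b+1}$-AS2U family then follows by applying this correspondence to a suitable Reed--Solomon-type code and composing with a strongly $2$-universal family via the composition theorem quoted above.
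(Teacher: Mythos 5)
Your argument is correct: the identity $\Pr[f_\kappa(m_1)=f_\kappa(m_2)]=1-d_H(c(m_1),c(m_2))/K$ for uniform $\kappa$ immediately yields the equivalence, and your reading of the (admittedly garbled) phrase ``minimum distance $\varepsilon\ge 1-d/n$'' as ``minimum distance $d$ satisfying $d\ge n(1-\varepsilon)$ with $n=K$'' is the intended one. Note that the paper itself offers no proof of this theorem --- it is imported verbatim from Ref.~\cite{Bierbrauer1994} --- so there is nothing to compare against; your derivation is the standard one from that reference, and your remarks on the degenerate case $d=0$ and on the typo in the paper's A2U definition (which writes $f_\kappa(m_1)=f_\kappa(m_1)$ where $f_\kappa(m_2)$ is meant) are both apt.
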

One can see that an $\varepsilon$-A2U family in some sense is equivalent to an error-correcting code. 

Consider the Reed-Solomon (RS) linear error-correcting code~\cite{Reed1960}.
Let $d_{\rm RC}$, $n_{\rm RC}$, and $k_{\rm RC}$ be its minimum distance, length, and rank, respectively.

\begin{theorem}[distance of RS code~\cite{Reed1960}].
    For the RS code $d_{\rm RC}=n_{\rm RC}-k_{\rm RC}+1$.
\end{theorem}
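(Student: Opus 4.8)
The plan is to prove the distance formula directly from the evaluation-code description of the Reed--Solomon code, without routing through the A2U--code correspondence of Theorem~\ref{thm:equivalence}: both inequalities will come out of the single elementary fact that a nonzero univariate polynomial of degree $<k_{\rm RC}$ over a field has at most $k_{\rm RC}-1$ roots.

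First I would fix the finite field $\mathbb{F}_q$ underlying the code, pick $n_{\rm RC}\le q$ pairwise-distinct evaluation points $\alpha_1,\dots,\alpha_{n_{\rm RC}}\in\mathbb{F}_q$, and recall that the RS code of length $n_{\rm RC}$ and rank $k_{\rm RC}$ is
\begin{equation}
    \mathcal{C}=\bigl\{\bigl(p(\alpha_1),\dots,p(\alpha_{n_{\rm RC}})\bigr)\ :\ p\in\mathbb{F}_q[x],\ \deg p<k_{\rm RC}\bigr\},
\end{equation}
which is an $\mathbb{F}_q$-linear subspace: the evaluation map is injective on polynomials of degree $<k_{\rm RC}$ because a difference of two such polynomials agreeing on $n_{\rm RC}\ge k_{\rm RC}$ points would be a nonzero polynomial of degree $\le k_{\rm RC}-1$ with $\ge n_{\rm RC}$ roots, which is impossible. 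Hence $\dim\mathcal{C}=k_{\rm RC}$.

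For the lower bound $d_{\rm RC}\ge n_{\rm RC}-k_{\rm RC}+1$: since $\mathcal{C}$ is linear, $d_{\rm RC}$ is the minimum Hamming weight of a nonzero codeword, and a nonzero codeword corresponds to some $p\neq0$ with $\deg p<k_{\rm RC}$, which vanishes on at most $k_{\rm RC}-1$ of the $\alpha_i$, so the codeword has at least $n_{\rm RC}-(k_{\rm RC}-1)$ nonzero coordinates. For the upper bound $d_{\rm RC}\le n_{\rm RC}-k_{\rm RC}+1$ I would simply exhibit a codeword attaining this weight: take $p(x)=\prod_{i=1}^{k_{\rm RC}-1}(x-\alpha_i)$, which has degree exactly $k_{\rm RC}-1<k_{\rm RC}$, hence lies in $\mathcal{C}$, and vanishes precisely at $\alpha_1,\dots,\alpha_{k_{\rm RC}-1}$, so its codeword has weight exactly $n_{\rm RC}-k_{\rm RC}+1$. (Equivalently, the upper bound is the Singleton bound, seen by projecting $\mathcal{C}$ onto any $k_{\rm RC}-1$ coordinates and using injectivity; but the explicit witness is cleaner and self-contained.) Combining the two inequalities yields $d_{\rm RC}=n_{\rm RC}-k_{\rm RC}+1$.

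I do not expect any genuinely hard step here; the only points needing care are bookkeeping conventions. One must assume $1\le k_{\rm RC}\le n_{\rm RC}\le q$ so that enough distinct evaluation points exist and the code is nondegenerate, and one must ensure the witness polynomial uses exactly $k_{\rm RC}-1$ linear factors, so that its degree stays below $k_{\rm RC}$ while its set of evaluation-point roots is as large as the degree bound permits. With those conventions fixed the argument is complete.
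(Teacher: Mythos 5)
Your argument is correct and complete: the lower bound $d_{\rm RC}\ge n_{\rm RC}-k_{\rm RC}+1$ from the root count of a nonzero polynomial of degree $<k_{\rm RC}$, and the matching upper bound from the explicit witness $\prod_{i=1}^{k_{\rm RC}-1}(x-\alpha_i)$, together give the stated equality, and your bookkeeping conditions $1\le k_{\rm RC}\le n_{\rm RC}\le q$ are satisfied by the paper's instantiation ($n_{\rm RC}=2^{b+s}=q$, $k_{\rm RC}=1+2^{s}$). Note that the paper itself offers no proof of this theorem; it is stated as a classical fact with a citation to Reed and Solomon, so there is no in-paper argument to compare against --- what you have written is the standard, self-contained evaluation-code proof of the MDS property, and it is exactly what one would supply if the citation were to be replaced by a proof.
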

Consider the RS code over ${\rm GF}({2^{b+s}})$ with $n_{\rm RC}=2^{b+s}$ and $k_{\rm RC}=1+2^s$.
According to Theorem~\ref{thm:equivalence} we can use the RS code to obtain a $2^{-b}$-A2U family with $\mathcal{A}={\rm GF}(2^{b+s})^{k_{\rm RC}}\cong \{0,1\}^{(b+s)(2^s+1)}$, $\mathcal{B}={\rm GF}(2^{b+s})\cong\{0,1\}^{b+s}$, and $\mathcal{K}={\rm GF}(2^{b+s})\cong\{0,1\}^{b+s}$,
where $\cong$ denotes equivalence between two sets.
In order to have $\{0,1\}^a\subseteq\mathcal{A}$ we choose a minimal possible integer $s$ such that $a\leq(b+s)(2^s+1)$.

Finally, we introduce the construction of the S2U family. 
\begin{theorem}[practical construction of S2U family~\cite{Bierbrauer1994}].
    Let $\pi: {\rm GF}(2^n)\rightarrow {\rm GF}(2^m)$ be a linear surjection. 
    Then $\mathcal{F} = \{f_{\kappa_1\kappa_2}\}_{\kappa_1\in{\rm GF}(2^n), \kappa_2\in{\rm GF}(2^m)}$ with
    $f_{\kappa_1\kappa_2}(x) = \pi(\kappa_1x)  + \kappa_2$ is S2U.
\end{theorem}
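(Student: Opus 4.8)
The plan is to verify directly the two defining properties of a strongly 2-universal family, i.e. the $\epsilon=|\mathcal{B}|^{-1}$ special case of requirements (i) and (ii) in the definition of an AS2U family, taking $\mathcal{A}={\rm GF}(2^n)$, $\mathcal{B}={\rm GF}(2^m)$, and key space $\mathcal{K}={\rm GF}(2^n)\times{\rm GF}(2^m)$ with $(\kappa_1,\kappa_2)$ uniform. Throughout I would use two elementary facts: multiplication by any nonzero $\delta\in{\rm GF}(2^n)$ is a bijection of ${\rm GF}(2^n)$; and a surjective linear map $\pi$ between finite ${\rm GF}(2)$-vector spaces is balanced, i.e. every fibre $\pi^{-1}(\tau)$ is a coset of $\ker\pi$ and hence has the same size $2^n/2^m=2^{n-m}$. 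I would also quietly use that we are in characteristic $2$, so addition equals subtraction.

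For condition (i), fix $x\in{\rm GF}(2^n)$ and $t\in{\rm GF}(2^m)$. For each fixed $\kappa_1$, the map $\kappa_2\mapsto\pi(\kappa_1 x)+\kappa_2$ is a bijection of ${\rm GF}(2^m)$, so exactly one $\kappa_2$ gives $f_{\kappa_1\kappa_2}(x)=t$; averaging over $\kappa_1$ yields $\Pr[f_{\kappa_1\kappa_2}(x)=t]=|\mathcal{B}|^{-1}$. Here $\kappa_2$ plays the role of a one-time-pad mask making the output uniform irrespective of $\kappa_1$ and $x$.

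For condition (ii), take distinct $m_1,m_2$ and arbitrary $t_1,t_2$, and put $\delta:=m_1+m_2\neq 0$, $\tau:=t_1+t_2$. I would compute the joint probability $\Pr[f(m_1)=t_1\wedge f(m_2)=t_2]$: for each $\kappa_1$ the equation $f(m_2)=t_2$ forces $\kappa_2=t_2+\pi(\kappa_1 m_2)$ (one choice out of $|\mathcal{B}|$), and adding the two equations $\pi(\kappa_1 m_1)+\kappa_2=t_1$ and $\pi(\kappa_1 m_2)+\kappa_2=t_2$ collapses (using linearity of $\pi$ and $2\kappa_2=0$) to the single constraint $\pi(\kappa_1\delta)=\tau$ on $\kappa_1$ alone. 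By bijectivity of multiplication by $\delta$ and balancedness of $\pi$, exactly $2^{n-m}=|\mathcal{K}_1|/|\mathcal{B}|$ values of $\kappa_1$ satisfy it, so the joint probability is $\frac{1}{|\mathcal{K}_1|\,|\mathcal{B}|}\cdot\frac{|\mathcal{K}_1|}{|\mathcal{B}|}=|\mathcal{B}|^{-2}$. Dividing by the marginal $\Pr[f(m_2)=t_2]=|\mathcal{B}|^{-1}$ from condition (i) gives the conditional probability $|\mathcal{B}|^{-1}$, which is precisely the S2U bound (attained with equality).

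There is no genuinely hard step here; the only points needing a little care are (a) that surjectivity plus linearity of $\pi$ forces all fibres to have equal cardinality, so the count of solutions of $\pi(\kappa_1\delta)=\tau$ is independent of $\tau$ (in particular the $\tau=0$ case behaves like the rest), and (b) that the argument does not require $m_1$ or $m_2$ to be nonzero — the reduction to a constraint on $\kappa_1$ used only $\delta=m_1+m_2\neq 0$ — so the full domain ${\rm GF}(2^n)$ is admissible even though $f_{\kappa_1\kappa_2}(0)=\kappa_2$ is a degenerate (constant-in-$\kappa_1$) function. Hence $\mathcal{F}$ is S2U, with exact parameter $\epsilon=|\mathcal{B}|^{-1}$.
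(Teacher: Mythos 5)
Your proof is correct. Note, however, that the paper itself offers no proof of this statement: it is imported verbatim from Ref.~\cite{Bierbrauer1994} as a known building block for the AS2U construction, so there is no in-paper argument to compare against. Your direct verification is the standard one and is complete: condition (i) follows because $\kappa_2$ acts as a uniform additive mask, and condition (ii) follows by eliminating $\kappa_2$ from the two tag equations, reducing (via linearity of $\pi$ and characteristic $2$) to the single constraint $\pi(\kappa_1\delta)=\tau$ with $\delta=m_1+m_2\neq 0$, which has exactly $2^{n-m}$ solutions in $\kappa_1$ because multiplication by a nonzero field element is a bijection and all fibres of a surjective linear map are cosets of its kernel. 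The two points you flag as needing care --- equal fibre sizes (so the count is independent of $\tau$, including $\tau=0$) and the admissibility of $m_1=0$ or $m_2=0$ --- are exactly the right ones, and both are handled correctly.
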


Combining the $2^{-b}$-A2U family of functions from $\mathcal{A}$ to $\{0,1\}^{b+s}$ based on RS codes and the introduced S2U family of functions from $\{0,1\}^{b+s}$ to $\{0,1\}^b$ we obtain a $2^{-b+1}$-AS2U family of functions from $\{0,1\}^a\subseteq \mathcal{A}$ to $\{0,1\}^b$.
The key length required to specify a function from the resulting family is equal to $(b+s)+(b+s+b)=3b+2s$.

\section{Proofs of the Theorems of Section~\ref{sec:security}}\label{sec:app:proofs}

\subsection{Proof of Theorem~\ref{thm:trade-off}}
\begin{proof}
    Let $\mathcal{C} \subset \{1,\ldots,N\}$ denote a subset of malicious internal recipients labels (that is any $\mathcal{P}_c$ for $c\in\mathcal{C}$ is malicious).
    Note that by definition of $\omega$, $|\mathcal{C}|\leq \omega$.
    Consider the verification process of a message-signature pair $(m,{\sf Sig}_m)$ by an honest recipient $\mathcal{P}_i$ ($i\notin \mathcal{C}$).
    According to Eq.~\eqref{eq:Tcrit} the pair is accepted at verification level $l_{\max}$ if and only if 
    \begin{equation}~\label{eq:sum-of-impacts}
        \sum_{j\notin\mathcal{C}}T^{m}_{i,j,l_{\max}}+
        \sum_{j\in\mathcal{C}}T^{m}_{i,j,l_{\max}} > \omega+l_{\max}\omega.
    \end{equation}
    Remember that, according to Eq.~\eqref{eq:scrit}, all terms $T^{m}_{i,j,l_{\max}}$ are computed by counting the number of incorrect tags and comparing the result with threshold value $s_{l_{\max}}k$.
    Each term $T^{m}_{i,j,l_{\max}}$ with $j\notin\mathcal{C}$ equals 1, since all corresponding authentication keys are obtained from the honest signer via other honest recipients.
    So, 
    \begin{equation}
        \sum_{j\notin\mathcal{C}}T^{m}_{i,j,l_{\max}}=N-\omega.
    \end{equation}
    
    The value of $\sum_{j\in\mathcal{C}}T^{m}_{i,j,l_{\max}}$ is controlled by the malicious coalition $\mathcal{C}$ and belongs to $\{0,\ldots,\omega\}$. 
    This is because each dishonest recipient can send incorrect (rubbish) keys to $\mathcal{P}_i$ at the distribution stage. 
    Considering $\sum_{j\in\mathcal{C}}T^{m}_{i,j,l_{\max}}=0$ as the worst case scenario, Eq.~\eqref{eq:sum-of-impacts} transforms into
    \begin{equation}
        N-\omega > \omega+l_{\max}\omega
    \end{equation}
    which results in Eq.~\eqref{eq:trade-off}.
    
    Any honest external recipient will also accept $(m,{\sf Sig}_m)$ at the maximal transferability level $l_{\max}$ since there will be at least $\omega+1$ honest internal recipients within the subset ${\Omega}$ used in the delegated verification.
    \end{proof}

\subsection{Proof of Theorem~\ref{thm:forgery}}

Before proceeding to the proof of Theorem~\ref{thm:forgery}, we consider the following lemma.
\begin{lemma} \label{Lemma:1}
    Let $\mathcal{F}=\{f_\kappa:\mathcal{A}\rightarrow\mathcal{B}\}_{\kappa\in\mathcal{K}}$ be an $\varepsilon$-AS2U family.
    Consider $m, m_1^\star,\ldots,m_n^\star\in \mathcal{A}$ and $t, t_1^\star, \ldots, t_n^\star\in \mathcal{B}$ such that each $m^\star_i\neq m$ and $n$ is some positive integer.
    Then
    \begin{multline}\label{eq:ASU2chain}
        \Pr[f_\kappa(m_n^\star)=t_n^\star|f_\kappa(m)=t \wedge f_\kappa(m_1^\star) \neq t_1^\star \wedge \cdots \wedge\\ f_\kappa(m_{n-1}^\star)\neq t_{n-1}^\star]
        \leq 
        (1-|\mathcal{B}|^{-1})^{n-1}\varepsilon\leq \varepsilon
    \end{multline}
    for $\kappa$ picked uniformly at random from $\mathcal{K}$.
\end{lemma}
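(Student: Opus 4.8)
The plan is to push everything into the probability space conditioned on the single event $\{f_\kappa(m) = t\}$. This conditioning is legitimate because property (i) of an $\varepsilon$-AS2U family forces $\Pr[f_\kappa(m) = t] = |\mathcal{B}|^{-1} > 0$. Writing $\mathbb{P}'[\cdot] := \Pr[\cdot \mid f_\kappa(m) = t]$, property (ii) becomes the uniform collision bound $\mathbb{P}'[f_\kappa(m') = s] \le \varepsilon$ valid for every $m' \ne m$ and every $s \in \mathcal{B}$; together with the fact (from property (i)) that $\Pr[f_\kappa(m') = s]$ is \emph{exactly} $|\mathcal{B}|^{-1}$ unconditionally, these are the only inputs I would use from here on. With $D_i := \{f_\kappa(m_i^\star) \ne t_i^\star\}$, the target quantity is $\mathbb{P}'[f_\kappa(m_n^\star) = t_n^\star \mid D_1 \cap \cdots \cap D_{n-1}]$.

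I would argue by induction on $n$. The base case $n = 1$ is just property (ii): $\mathbb{P}'[f_\kappa(m_1^\star) = t_1^\star] \le \varepsilon = (1-|\mathcal{B}|^{-1})^{0}\varepsilon$. For the inductive step, peel off the last condition and write the probability as the ratio
\[
\frac{\mathbb{P}'\!\left[\{f_\kappa(m_n^\star) = t_n^\star\} \cap D_{n-1} \,\big|\, D_1 \cap \cdots \cap D_{n-2}\right]}{\mathbb{P}'\!\left[D_{n-1} \,\big|\, D_1 \cap \cdots \cap D_{n-2}\right]}.
\]
For the numerator I would discard $D_{n-1}$ and invoke the induction hypothesis with $m_n^\star$ as the target message and $D_1,\dots,D_{n-2}$ as the side conditions, bounding it by $(1-|\mathcal{B}|^{-1})^{n-2}\varepsilon$. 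For the denominator I would write $\mathbb{P}'[D_{n-1}\mid D_1 \cap \cdots \cap D_{n-2}] = 1 - \mathbb{P}'[f_\kappa(m_{n-1}^\star) = t_{n-1}^\star \mid D_1 \cap \cdots \cap D_{n-2}]$ and estimate the subtracted term, again by the induction hypothesis. Combining the two estimates yields a bound of the claimed shape, after the arithmetic bookkeeping of the prefactor; the trailing inequality $(1-|\mathcal{B}|^{-1})^{n-1}\varepsilon \le \varepsilon$ is then immediate since each factor is at most $1$.

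The hard part will be the denominator: a crude union bound only gives $\mathbb{P}'[D_{n-1}\mid D_1 \cap \cdots \cap D_{n-2}] \ge 1 - \varepsilon$, which is weaker than the $1-|\mathcal{B}|^{-1}$ one needs to harvest exactly one factor of $(1-|\mathcal{B}|^{-1})$ per conditioning step. Closing this gap is where property (i) must be used in earnest — the unconditional value of each collision probability is pinned at $|\mathcal{B}|^{-1}$ — together with property (ii) to control how the earlier conditions $D_1,\dots,D_{n-2}$ perturb it; I would set the induction up so that numerator and denominator bounds are carried simultaneously and the product $(1-|\mathcal{B}|^{-1})^{n-1}$ accumulates telescopically. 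As a fallback I note that in the downstream forgery argument only the weaker consequence $\Pr[\cdot] \le \varepsilon$ is actually invoked, so a looser handling of the denominator would still be enough for the applications even if the sharp prefactor proves delicate.
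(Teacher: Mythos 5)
Your skeleton---condition on $f_\kappa(m)=t$, induct on $n$, and peel off the last failure event $D_{n-1}$ by writing the target as a ratio---is exactly the route the paper takes (there the same step appears as a total-probability decomposition $\Pr[\Psi_n\mid\Phi_{n-2}]=\Pr[\Psi_n\mid\Phi_{n-1}]\Pr[\overline{\Psi}_{n-1}\mid\Phi_{n-2}]+\cdots$ with the second term dropped). But the gap you flag at the denominator is not closable, and there is a second, purely arithmetic problem you have not noticed: even if the denominator were pinned at exactly $1-|\mathcal{B}|^{-1}$, your recursion reads ``numerator $\le(1-|\mathcal{B}|^{-1})^{n-2}\varepsilon$, divided by $1-|\mathcal{B}|^{-1}$,'' so each conditioning step \emph{multiplies} the bound by $(1-|\mathcal{B}|^{-1})^{-1}>1$. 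The factors accumulate in the wrong direction: this decomposition can only ever yield something of the form $\varepsilon/(1-|\mathcal{B}|^{-1})^{\,n-1}$, growing in $n$, never the shrinking prefactor $(1-|\mathcal{B}|^{-1})^{n-1}$ in the claim. There is no way to ``harvest'' a factor of $1-|\mathcal{B}|^{-1}$ per step from a ratio whose denominator is at most $1$; you can only pay one. (The paper's write-up substitutes the unconditional $\Pr[\overline{\Psi}_{n-1}]=1-|\mathcal{B}|^{-1}$ for the required conditional probability and then asserts the shrinking bound; the rearrangement does not actually produce it.)

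Worse, the inequality itself fails for general $\varepsilon$-AS2U families, so no refinement of the denominator estimate can rescue either your argument or the paper's. Take $\mathcal{B}=\{0,1\}$, $\mathcal{K}=\{0,1\}^2$ uniform, and three messages with $f_\kappa(m)=\kappa_1$, $f_\kappa(m_1^\star)=\kappa_2$, $f_\kappa(m_2^\star)=\kappa_1\oplus\kappa_2$. This family is pairwise independent with uniform marginals, hence S2U with $\varepsilon=|\mathcal{B}|^{-1}=1/2$, yet conditioning on $\kappa_1=t$ and $\kappa_2=1-t_1^\star$ determines $f_\kappa(m_2^\star)$ completely, so for a suitable $t_2^\star$ the left-hand side of the lemma equals $1$, against the claimed $1/4$---and also against the weaker fallback $\le\varepsilon$ that the downstream forgery count relies on. The root cause is that properties (i) and (ii) constrain only pairwise marginals, while the conditioning event in the lemma couples three or more evaluation points; the most one can extract in general is $\Pr[\cdot]\le\varepsilon/(1-(n-1)\varepsilon)$ by a union bound on the conditioning event. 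So your instinct that the denominator is where the proof must live is correct, but the honest conclusion is that the step fails, and the statement needs either extra hypotheses on the family or a reformulated (weaker, $n$-dependent) bound.
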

\begin{proof}
    First of all, to simplify our consideration we introduce some new denotations.
    Let ${\bf \Psi}$, ${\bf \Psi}_i$, and $\overline{{\bf \Psi}}_i$ denote events
    \begin{equation}\label{eq:def-of-psi}
        f_\kappa(m)=t, \quad f_\kappa(m_i^\star) = t_i^\star,\quad \quad f_\kappa(m_i^\star) \neq t_i^\star,
    \end{equation}
    respectively.
    We also introduce a joint event 
    \begin{equation}\label{eq:def-of-phi}
        {\bf \Phi}_l\equiv{\bf \Psi} \wedge \overline{{\bf \Psi}}_1 \wedge \cdots \wedge \overline{{\bf \Psi}}_l.
    \end{equation}
    Then the main statement~\eqref{eq:ASU2chain} takes the compact form
    \begin{equation}
         \Pr[{\bf \Psi}_{n}|{\bf \Phi}_{n-1}]\leq (1-|\mathcal{B}|^{-1})^{n-1}\varepsilon.
    \end{equation}
    
    The proof is by induction.
    For $n=1$ Eq.~\eqref{eq:ASU2chain} directly follows from the definition of the $\varepsilon$-AS2U family.
    
    Next we assume that the main statement is true for $n=j-1$ and prove its validity for $n=j$.
    In particular, for the set $\{m_1^\star,\ldots, m_{j-2}^\star, m_{j}^\star\}$ we assume that
    \begin{equation} \label{eq:induction-assumption}
        \Pr[{\bf \Psi}_{j}|{\bf \Phi}_{j-2}]\leq (1-|\mathcal{B}|^{-1})^{j-2}\varepsilon.
    \end{equation}
    According to the Bayesian rule, we have
    \begin{multline}\label{eq:cond-probs-chain}
        \Pr[{\bf \Psi}_j|{\bf \Phi}_{j-2}]=
        \Pr[{\bf \Psi}_j|{\bf \Phi}_{j-1}]\Pr[\overline{{\bf \Psi}}_{j-1}] \pm \\
        \Pr[{\bf \Psi}_j|{\bf \Phi}_{j-2} \wedge {\bf \Psi}_{j-1}]\Pr[{\bf \Psi}_{j-1}].
    \end{multline}
    Taking into account the assumption~\eqref{eq:induction-assumption}, non-negativity of the second term on the right-hand side of~\eqref{eq:cond-probs-chain}, and the fact that $\Pr[\overline{\bf \Psi}_{j-1}]=1-\Pr[{\bf \Psi}_{j-1}]=1-|\mathcal{B}|^{-1}$ (according to the definition of the $\varepsilon$-AS2U family), we arrive at
    \begin{equation}
        \Pr[{\bf \Psi}_j|{\bf \Phi}_{j-1}] \leq (1-|\mathcal{B}|^{-1})^{j-1}\varepsilon.
    \end{equation}
    Thus Lemma 1 is proven.
\end{proof}

The results of Lemma~\ref{lemma:1} can be interpreted as follows.
Consider an authentication system for messages going from Alice to Bob and based on employing the $\varepsilon$-AS2U family $\mathcal{F}=\{f_\kappa\}_{\kappa\in\mathcal{K}}$.
Consider eavesdropper Eve, who has a valid message-authentication tag pair $(m, f_\kappa(m))$ but does not know secret key $\kappa$, and trying to forge a message from Alice to Bob.
We assume that Bob stops considering messages with tags related to the secret key $\kappa$, after either obtaining a message with valid tag or obtaining $\mu$ messages with incorrect tags.
In this way, Eve has several attempts to force Bob to accept forged message-signature pairs. 
Lemma~\ref{lemma:1} states that each next attempt of Eve has the same upper bound on the success probability ($\varepsilon$), regardless of the history of previous unsuccessful attempts.

We now proceed with the proof of Theorem~\ref{thm:forgery}

\begin{proof}
	First of all, we note that if $(m^\star,\sigma^\star)$ is rejected (at verification level $l \geq 0$) by every honest internal recipient, then $(m^\star,\sigma^\star)$ is also rejected by any honest external recipient due to construction of the delegated verification procedure (recall that the majority of any requested $2\omega+1$ internal recipients are honest).

    Let $\mathcal{C} \subset \{1,\ldots,N\}$ denote a subset of malicious internal recipient labels (that is any $\mathcal{P}_c$ for $c\in\mathcal{C}$ is malicious).
    The size of the coalition is upper bounded by $|\mathcal{C}|\leq \omega$.

    Let the coalition possess a valid message-signature pair $(m,{\sf Sig}_m)$ generated by the signer.
    In order to simplify the attackers' task as much as possible, consider an attack where the coalition tries to force an honest recipient $\mathcal{P}_i$ ($i\notin\mathcal{C}$) to accept a forged pair $(m^\star,\sigma^\star)$ with $m^\star\neq m$ at the lowest possible verification level $l=0$.
    
    The coalition possesses all authentication keys with indices $\{R_{c\rightarrow i}| c\in\mathcal{C}\}$, so they can make
    $T_{i,c,0}^{m^\star}=1$ for every $c\in\mathcal{C}$ and $m^\star$.
    Then the acceptance condition, given by Eq.~\eqref{eq:Tcrit}, takes the form
    \begin{equation}
        \sum_{j\notin\mathcal{C}}T_{i,j,0}^{m^\star}>0
    \end{equation}
    assuming the worst case scenario $|\mathcal{C}|=\omega$.
    The obtained inequality means that the coalition needs to forge at least one test corresponding to an honest recipient $j\notin\mathcal{C}$.
    Recall that a test is accepted at verification level $l=0$ if the number of tag mismatches within $k$ verified tags is less than $s_0k<k$ [see Eq.~\eqref{eq:scrit}].
 
    Now let us count the number $\mu$ of attempts that the coalition can make in order to forge a single test related to an honest recipient ${\cal P}_j$.
    First of all, each internal recipient from the coalition can try to send the forged pair directly to ${\cal P}_i$.
    If an attempt fails, then the corresponding node falls into ${\sf block\_list}_i^{\sf int}$.
    The coalition also can use  information from the delegated verification.
    In the worst case scenario all $M$ external recipients are in the coalition, so an additional $M(M+\omega)$ attempts can be made: Each external recipient ${\cal E}_i'$ can make $M+\omega$ trials until falling into ${\sf block\_list}_i^{\sf int}$ due to reaching the critical value of the counter ${\sf cnt}_{i,i'}$.
    So the resulting number of trials $\mu$ is given by
    \begin{equation}
        \mu = |{\cal C}| + M(M+\omega)\leq \omega + M(M+\omega).
    \end{equation}
    
    The success of each attempt is determined by the success of authentication tag forging. 
    According to Lemma~\ref{lemma:1} the success probability for a forgery of each tag in each attempt can be upper bounded by $\varepsilon$ of the employed $\varepsilon$-AS2U family (which is $2^{1-b}$ in our case) regardless the history of previous attempts.
    Next we use this upper bound and consider attempt results as independent variables.
    
    Let $p_0>0$ be an upper bound on a probability of event $T_{i,j,0}^{m^\star}=1$ for some $j\notin\mathcal{C}$ within a single attempt.
    Then the probability of success within $\mu$ attempts is upper bounded by
    \begin{equation}
        1-(1-p_0)^\mu <\mu p_0
    \end{equation}
    (this inequality can be verified by considering derivatives with respect to $p_0$).
    The number of honest recipients equals $N-|\mathcal{C}|$, so the upper bound on the probability of a successful attack against $\mathcal{P}_i$ is given by $(N-|\mathcal{C}|)\mu p_0$.
    Moreover, there are $N-|\mathcal{C}|$ variants of choosing $i$, i.e., honest users that can be attacked, so we obtain the probability bound on the forgery event
    \begin{multline}\label{eq:forgery_with_p0}
        \Pr[{\rm forgery}]  < (N-|\mathcal{C}|)^2\mu p_0 \\< N^2[\omega+M(M+\omega)] p_0 
        = J({N,M,\omega})p_0,
    \end{multline}
    where $J({N,M,\omega})=N^2[\omega+M(M+\omega)]$.
    
    In the rest of the proof we derive an upper bound $p_0$ for probability of $T_{i,j,0}^{m^\star}=1$ for fixed $i$ and $j$.
    According to the definition of the $\varepsilon$-AS2U family we have
    \begin{equation} \label{eq:single_test_forg_bound}
        \Pr[T_{i,j,0}^{m^\star}=1] \leq \sum_{v = 0}^{\lfloor ks_0 \rfloor} \binom{k}{v} (1 - \varepsilon)^v \varepsilon^{k - v}
    \end{equation}
    with $\varepsilon=2^{-b+1}$.
    
    Consider the case $s_0<\frac{1}{2}$.
    We apply the sequence of inequalities
    \begin{multline} \label{eq:binomial_ineqs}
        (1-\varepsilon)^v\varepsilon^{k-v} = \frac{(2^{b-1}-1)^v}{2^{k(b-1)}}< \frac{2^{v(b-1)}}{2^{k(b-1)}} \\\leq
        2^{-k(b-1)(1-s_0)},
    \end{multline}
    where we use the fact that $v\leq \lfloor ks_0 \rfloor$.
    Note that the obtained result is not independent of $v$.
    Then we apply an known upper bound for a sum of binomial coefficients based on Shannon's entropy (see, e.g.,~\cite{Galvin2014} for details),
    \begin{equation}~\label{eq:Shannon_bound}
        \sum_{v=0}^{\lfloor ks_0 \rfloor} \binom{k}{v} \leq 2^{kH_2(s_0)},
    \end{equation}
    where
    \begin{equation}
        H_2(\xi) = - \xi\log_2\xi-(1-\xi)\log_2(1-\xi)
    \end{equation}
    is a binary Shannon entropy.
    Substituting the obtained bounds~\eqref{eq:binomial_ineqs} and~\eqref{eq:Shannon_bound} into Eq.~\eqref{eq:single_test_forg_bound} and then putting the result into the forgery probability bound~\eqref{eq:forgery_with_p0}, we obtain the first bound~\eqref{eq:forg-bound-1} of Theorem 2.
    
    Next consider the case $s_0<1-2^{-b+1}$.
    We can apply the well-known Hoeffding inequality to obtain
    \begin{equation}
        \sum_{v = 0}^{\lfloor ks_0 \rfloor} \binom{k}{v} (1 - \varepsilon)^v \varepsilon^{k - v} \leq e^{-2k(1 - s_0 - 2^{-b+1})^2}.
    \end{equation}
    Putting it into Eq.~\eqref{eq:single_test_forg_bound} and then into~\eqref{eq:forgery_with_p0}, we obtain the second bound~\eqref{eq:forg-bound-2} of Theorem 2.
\end{proof}

\subsection{Proof of Theorem~\ref{thm:non-transferability}}

\begin{proof}
    One can see that the non-transferability event can happen only in the situation where the malicious coalition succeeded in the creation of a message-signature pair $(m,\sigma)$ acceptable by some honest internal recipient $\mathcal{P}_i$ at verification level $l\geq1$, but not acceptable by another honest internal recipient $\mathcal{P}_j$ at verification level $l-1$.
    
    Let $\mathcal{C} \subset \{1,\ldots,N\}$ denote a subset of malicious internal recipient labels ($i,j\notin {\cal C}$).
    The coalition is able to control values of $T_{i, c, l}^m$ and $T_{j, c, l'}^m$ for any $c\in\mathcal{C}$, $m$, and $l$.
    So the best strategy for the coalition to carry out the attack is to make $T_{i, c, l}^m=1$ and $T_{j, c, l-1}^m=0$.
    Then the success of the attack corresponds to fulfillment of the following inequalities:
    \begin{equation} \label{eq:nontr-cond}
        \begin{aligned}
             &\sum_{h\notin \mathcal{C}}T_{i,h,l}^m + |\mathcal{C}|>\omega+l\omega,\\
             &\sum_{h\notin \mathcal{C}}T_{j,h,l-1}^m\leq \omega+(l-1)\omega.
        \end{aligned}
    \end{equation}
    Subtracting one from the other, we obtain
    \begin{equation} \label{eq:ineq_on_T}
        \sum_{h\notin \mathcal{C}}(T_{i,h,l}^m-T_{j,h,l-1}^m)>\omega-|\mathcal{C}|\geq 0.
    \end{equation}
    Thus, the necessary condition for Eq.~\eqref{eq:ineq_on_T} to be true is the existence of at least one $h\notin\mathcal{C}$ such that 
    \begin{equation} \label{eq:nontr-event}
        T_{i,h,l}^m=1 \wedge T_{j,h,l-1}^m=0.    
    \end{equation}
    In what follows we find an upper bound on the probability of this event. 
    Let us introduce the number tag mismatches on the side of the honest recipient $\mathcal{P}_{\tilde{i}}$ within a group of tags corresponding to some honest recipient $\mathcal{P}_h$ with respect to the message $m$,
    \begin{equation}
        G_{\tilde{i},h}^m := \sum_{r\in R_{h\rightarrow \tilde{i}}} g(f_r(m),t_r),
    \end{equation}
    where $t_r$ comes from the corresponding signature $\sigma$.
    Then the event~\eqref{eq:nontr-event} is equivalent to
    \begin{equation}
        G_{i,h}^m < s_lk \wedge 
        G_{j,h}^m \geq s_{l-1}k.
    \end{equation}
    The probability of this event can be upper bounded as follows:
    \begin{multline} \label{eq:min-of-probs}
        \Pr[G_{i,h}^m < s_lk \wedge 
        G_{j,h}^m \geq s_{l-1}k] \\ \leq \min\left(
            \Pr[G_{i,h}^m < s_lk],
            \Pr[G_{j,h}^m \geq s_{l-1}k]
        \right).
    \end{multline}
    
    The coalition is able to obtain the whole subset of tag indices $\{R_{h\rightarrow h'} | h,h'\notin \mathcal{C}\}$; however, the particular subsets $\{R_{h\rightarrow i} | h\notin \mathcal{C}\}$ and $\{R_{h\rightarrow j} | h\notin \mathcal{C}\}$ are not known to the coalition.
    Therefore, if the malicious signer corrupts some number of tags in the correct signature ${\sf Sig}_m$ of the message $m$, then the mean number of incorrect tags within $\{R_{h\rightarrow j} | h\notin \mathcal{C}\}$ and $\{R_{h\rightarrow j}\}_{h\notin \mathcal{C}}$ will be the same with respect to the distribution of introduced corruption.
    So the expectation value of $G_{i,h}^m$ and $G_{j,h}^m$ are the same.
    Let us denote it by $\overline{G}$.
    
    Then, using the Hoeffding inequality for sampling without replacement, we obtain the following bounds:
    \begin{eqnarray}
        &\begin{split}
        \Pr[G_{i,h}^m < s_lk] &\leq \Pr[|G_{i,h}^m-\overline{G}| \geq \overline{G} - s_lk ] \\ 
            &\leq 2\exp\left(
            \frac{2(s_lk - \overline{G})^2}{k}
        \right),
        \end{split}
        \\
        &\begin{split}
        \Pr[G_{j,h}^m \geq s_{l-1}k] &\leq \Pr[|G_{j,h}^m-\overline{G}| \geq s_{l-1}k - \overline{G} ] \\ &\leq 2\exp\left(
            \frac{2(s_{l-1}k - \overline{G})^2}{k}
        \right).
        \end{split}
    \end{eqnarray}
    The equality between the bounds in achieved for $\overline{G}=(s_l+s_{l-1}/2)k$, so the minimum of two probabilities in Eq.~\eqref{eq:min-of-probs} then can be bounded as
    \begin{multline}
        \min\left(
            \Pr[G_{i,h}^m < s_lk],
            \Pr[G_{j,h}^m \geq s_{l-1}k]
            \right) \\\leq 2e^{-k\Delta s^2/2},
    \end{multline}
    where $\Delta s=(s_l-s_{l-1})/2=s_0/l_{\max}$.
    
    Since there are at most $N$ honest recipients the probability of Eq.~\eqref{eq:nontr-event} for at least one $h\notin\mathcal{C}$ is upper bounded by $2Ne^{-k\Delta s^2/2}$.
    Finally, the upper bound on the probability of the non-transferability event is obtained by taking into account that there are at most $N(N-1)$ variants of choosing a pair of honest recipients $i$ and $j$.
\end{proof}

\subsection{Proof of Theorem~\ref{thm:reputication}}
\begin{proof}
    Taking into account that the number of malicious internal recipients is lower than $N/2$, the repudiation event implies a non-transferability event for the transition from $l=1$ to $l=0$.
    Thus, the probability of repudiation can be upper bounded by the probability of non-transferability given by Eq.~\eqref{eq:non-trans-bound}.
\end{proof}

\subsection{Proof of Theorem~\ref{thm:majority_vote}}
\begin{proof}
    The proof is trivial.     
    Since the majority of $2\omega+1$ internal recipients requested by ${\cal E}_i$ are honest, the outcome of the majority vote result verification will be equal to $\bot$ if there was no majority vote, and ${\sf MV}(m,\sigma)$ otherwise.
\end{proof}

\subsection{Proof of Theorem~\ref{thm:block-lists}}
\begin{proof}
To prove the theorem, we first revise situations where one (either internal or external) recipient ${\cal R}_1$ puts another (also either internal or external) recipient ${\cal R}_2$ into block list.

The first option is that a message-signature pair accepted by ${\cal R}_2$ at verification level $l\geq 1$ is sent to ${\cal R}_1$ who does not accept it at verification level  $l'\geq l-1$.
This option is equivalent to a non-transferability event, and so the probability of this false blocking is not greater than $\Pr[\text{non-transferability}]$.

The second option is that ${\cal R}_2={\cal E}_j$ is an external recipient, and an internal recipient ${\cal R}_1={\cal P}_i$ puts  ${\cal E}_j$ on the block list due to reaching a critical value of the counter ${\sf cnt}_{i,j}=M+\omega$ during the  delegated verification of a message-signature pair $(m,\sigma)$.
Let us show that if ${\cal E}_j$ is honest and there are no non-transferability and forgery events, then this cannot happen.
In the case of no non-transferability and forgery events, each incrimination of ${\sf cnt}_{i,j}$ is accompanied by adding the sender of an $(m,\sigma)$ pair to ${\cal E}_j$ to ${\sf block\_list}^{\sf ext}_j$.
Indeed, if $l_{{\rm ver},i}(m,\sigma)< l-2$, then in the case of no non-transferability or forgery event $l_{{\rm ver},i'}(m,\sigma)\leq l-2$ for every honest ${\cal P}_j$, and the result of delegated verification $l^{\rm ext}_{{\rm ver},i}\leq l-2$, and so the ${\cal E}_j$ have to add the sender to ${\sf block\_list}^{\sf ext}_j$.
The number of malicious senders can not exceed $M-1+\omega$, since all other external recipients might be malicious, and the number of malicious nodes in the internal subnetwork is less than or equal to $\omega$.
In the worst-case scenario, ${\cal E}_j$ will add all the malicious nodes to ${\sf block\_list}^{\sf ext}_j$ and still have remaining an attempt to request ${\cal P}_i$ (recall that the critical number for the counter ${\sf cnt}_{i,j}$ is equal to $M+\omega$).
The next sender of a message-signature pair will be honest, and in the absence of a non-transferability event or forgery event the counter will not increase.

In this way, the false blocking event can not happen without forgery or non-transferability events, and so the bound~\eqref{eq:false-blocking} holds true.
\end{proof}

\section{Optimization of the scheme parameters} \label{sec:app:optimization}
Here we describe the optimization procedure for finding the optimal values of the parameters $k$, $s_0$, and $b$, providing the minimal symmetric key consumption $L_{\rm tot}$, given message length $a$, security parameter $\varepsilon_{\rm tot}$, number of internal and external recipients $N$ and $M$, respectively, maximal number of dishonest nodes in the internal subnetwork $\omega$, and maximal verification level $l_{\max}$.

We first note that the bounds on probabilities of forgery and not-transferability attacks can be written as
\begin{equation}
    \begin{aligned}
            &\Pr[{\rm forgery}] < \alpha_1 e^{-\beta_1(s_0,b)k}
            \equiv B_1(k,s_0,b), \\
            &\Pr[\text{non-transferability}] \leq \alpha_2 e^{-\beta_2(s_0)k}\equiv B_2(k,s_0),
    \end{aligned}
\end{equation}
where according to~\eqref{eq:forg-bound-1},\eqref{eq:forg-bound-2}, and \eqref{eq:non-trans-bound}
\begin{equation}
    \begin{aligned}
        &\alpha_1 = N^2[\omega+M(\omega+M)], \\
        &\alpha_2 = 2N^2(N-1), \\
        &\beta_1(s_0,b) = \begin{cases}
            \max{(\beta_1'(s_0,b), \beta_1''(s_0,b))}, &s_0 < 1/2,\\\\
            \beta_1''(s_0,b), &s_0 < 1-2^{1-b},
        \end{cases}\\
         &\beta_1'(s_0,b) = (b - 1)\left(1 - s_0 - \frac{H_2(s_0)}{b - 1}\right),\\
         &\beta_1''(s_0,b)= 2(1 - s_0 - 2^{1-b})^2,\\
         &\beta_2(s_0) = s_0^2/(2l_{\max}^2).
    \end{aligned}
\end{equation}
Recall that both bounds decrease with $k$, while increasing $s_0$ affects the bounds in opposite ways.
At the same time, the total key consumption $L_{\rm tot}$ increases with $k$.

To find the optimal solution, we consider the approximate identities
\begin{equation}
    B_1(k,s_0,b) \approx B_2(k,s_0) \approx \frac{\varepsilon_{\rm tot}}{2}
\end{equation}
that corresponds to the (approximate) identity relations of bounds and saturating the tolerable security level.
These identities provide the expression for $k$ as a function of $s_0$,
\begin{equation} \label{eq:kforopt}
    k = \left\lceil \frac{\ln\alpha_2-\ln(\varepsilon_{\rm tot}/2)}{\beta_2(s_0)} \right\rceil,
\end{equation}
and the equation for $s_0$ and $b$,
\begin{equation} \label{eq:s0foropt}
    [\beta_2(s_0)-\beta_1(s_0,b)] \left\lceil \frac{\ln\alpha_2-\ln(\varepsilon_{\rm tot}/2)}{\beta_2(s_0)} \right\rceil = \ln\frac{\alpha_2}{\alpha_1}.
\end{equation}

In order to obtain the solution, we iterate over the values of $b$, solve numerically Eq.~\eqref{eq:s0foropt} to find $s_0$, then substitute it in~\eqref{eq:kforopt} to find $k$, and obtain the corresponding key consumption $L_{\rm tot}$ from Eqs.~\eqref{eq:keyconsumptions} and \eqref{eq:keyconsumptiontot}.
Typical behavior of $L_{\rm tot}$ as a function of $b$ is shown in Fig.~\ref{fig:opt-details}.
One can see that there is a clear minimum,  which is usually around the value $b=6$.
We use this value of $b$ and corresponding values of $k$ and $s_0$ as the result of the parameter optimization routine.
We also note that from Fig.~\ref{fig:opt-details} it can be seen that the consideration of the minimal possible tag length $b=2$ provides quite non-optimal results. 

\begin{figure}[]
   \centering
    \includegraphics[width=\linewidth]{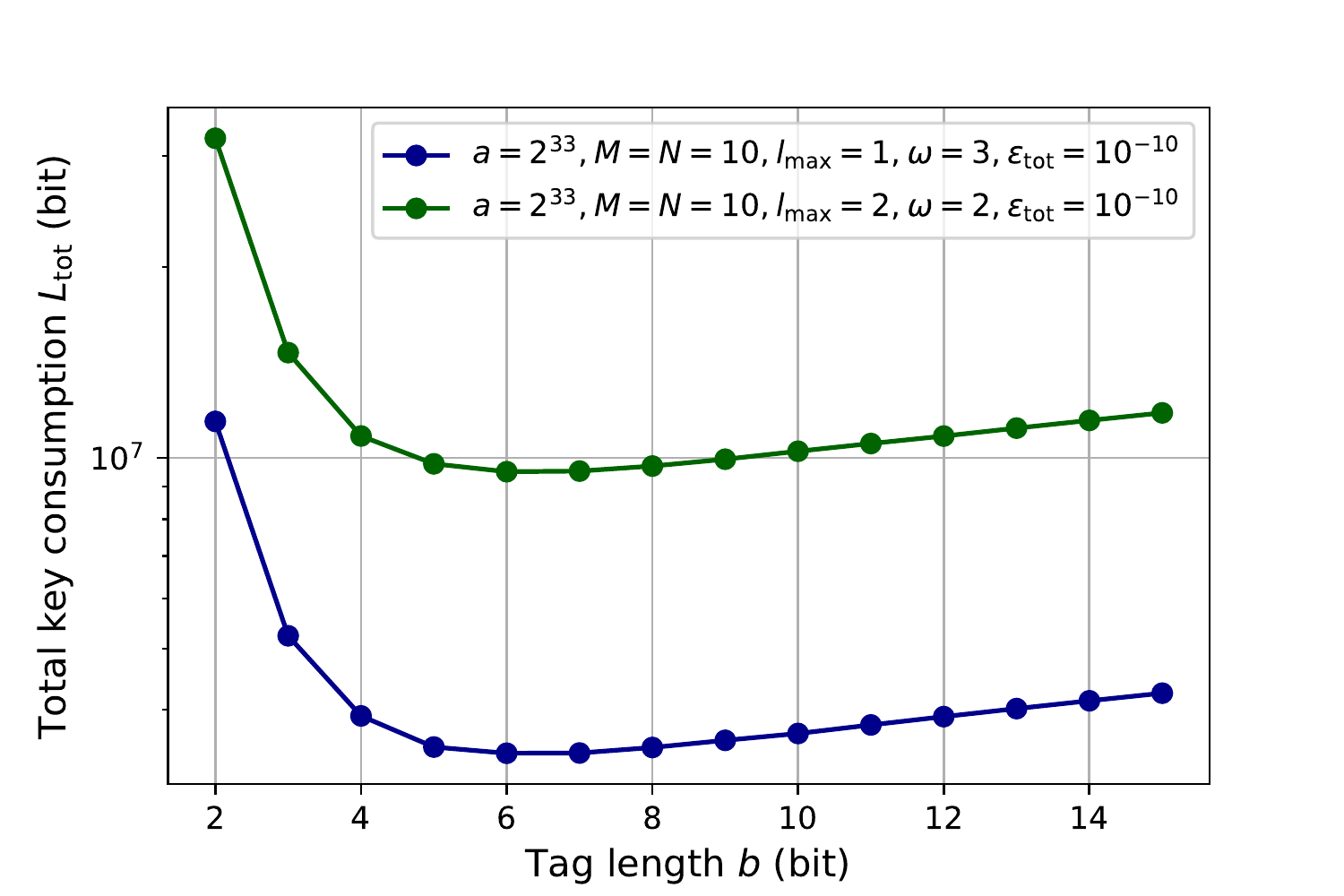}
    \vskip 2mm
    \caption{Total symmetric key consumption $L_{\rm tot}$ as a function of the tag length $b$ for two different sets of values of $a, M, N, l_{\max}, \omega$, and $\varepsilon_{\rm tot}$.
    The values of $k$ and $s_0$ are obtained from Eqs.~\eqref{eq:kforopt} and \eqref{eq:s0foropt}, respectively.}
    \label{fig:opt-details}
\end{figure}

\end{document}